\documentclass[sigconf,nonacm]{acmart}
\usepackage{algorithm}
\usepackage{algpseudocode}
\usepackage{todonotes}
\usepackage{subcaption}
\captionsetup[subfigure]{aboveskip=2pt, belowskip=2pt}

\usepackage{amsmath,amsfonts}
\usepackage{graphicx}
\usepackage{textcomp}
\newtheorem{definition}{Definition}
\newtheorem{example}{Example}
\newtheorem{lemma}{Lemma}

\newtheorem{theorem}{Theorem}

\setcounter{tocdepth}{3}
\usepackage{epstopdf}
\usepackage{booktabs}
\usepackage{url}
\usepackage{bbm}
\usepackage{multirow}
\usepackage{enumitem}
\usepackage{color}
\newcommand{\kwnospace}[1]{{\ensuremath {\mathsf{#1}}}}
\newcommand{\stitleunderline}[1]{\vspace{1ex}\noindent\underline{{\bf #1}}}
\newcommand{\stitle}[1]{\vspace{0.5ex} \noindent{\bf #1}}
\long\def\comment#1{}

%%
%% end of the preamble, start of the body of the document source.
\begin{document}

%%
%% The "title" command has an optional parameter,
%% allowing the author to define a "short title" to be used in page headers.
\title{$\delta$-EMG: A Monotonic Graph Index for Approximate Nearest Neighbor Search}

\author{Liming Xiang}
\affiliation{%
  \institution{Beijing Institute of Technology}
  \city{Beijing}
  \country{China}}
\email{3120240967@bit.edu.cn}

\author{Jing Feng}
\affiliation{%
	\institution{Beijing Institute of Technology}
	\city{Beijing}
	\country{China}}
\email{scererymc@gmail.com}

\author{Ziqi Yin}
\affiliation{%
	\institution{Nanyang Technological University}
	\country{Singapore}}
\email{ziqi003@e.ntu.edu.sg}

\author{Zijian Li}
\affiliation{%
	\institution{Huawei Technologies Ltd}
	\country{China}}
\email{lizijian8@huawei.com}

\author{Daihao Xue}
\affiliation{%
	\institution{Huawei Technologies Ltd}
	\country{China}}
\email{xuedaihao@huawei.com}

\author{Hongchao Qin}
\affiliation{%
	\institution{Beijing Institute of Technology}
	\city{Beijing}
	\country{China}}
\email{qhc.neu@gmail.com}

\author{Ronghua Li}
\affiliation{%
	\institution{Beijing Institute of Technology}
	\city{Beijing}
	\country{China}}
\email{lironghuabit@126.com}

\author{Guoren Wang}
\affiliation{%
	\institution{Beijing Institute of Technology}
	\city{Beijing}
	\country{China}}
\email{wanggrbit@gmail.com}

%%
%% The abstract is a short summary of the work to be presented in the
%% article.
\begin{abstract}
Approximate nearest neighbor (ANN) search in high-dimensional spaces is a foundational component of many modern retrieval and recommendation systems. Currently, almost all algorithms follow an $\epsilon$-Recall-Bounded principle when comparing performance: they require the ANN search results to achieve a recall of more than $1-\epsilon$ and then compare query-per-second (QPS) performance. However, this approach only accounts for the recall of true positive results and does not provide guarantees on the deviation of incorrect results. To address this limitation, we focus on an Error-Bounded ANN method, which ensures that the returned results are a $(1/\delta)$-approximation of the true values.
Our approach adopts a graph-based framework. To enable Error-Bounded ANN search, we propose a $\delta$-EMG (Error-bounded Monotonic Graph), which, for the first time, provides a provable approximation for arbitrary queries. By enforcing a $\delta$-monotonic geometric constraint during graph construction, $\delta$-EMG ensures that any greedy search converges to a $(1/\delta)$-approximate neighbor without backtracking. Building on this foundation, we design an error-bounded top-$k$ ANN search algorithm that adaptively controls approximation accuracy during query time. To make the framework practical at scale, we introduce $\delta$-EMQG (Error-bounded Monotonic Quantized Graph), a localized and degree-balanced variant with near-linear construction complexity. We further integrate vector quantization to accelerate distance computation while preserving theoretical guarantees.
Extensive experiments on the ANN-Benchmarks dataset demonstrate the effectiveness of our approach. Under a recall requirement of 0.99, our algorithm achieves 19,000 QPS on the SIFT1M dataset, outperforming other methods by more than 40\%. Additional results show that, under the same precision requirements, our $\delta$-Error-bounded approach achieves higher QPS than all existing SOTA methods.
\end{abstract}

\keywords{ANN Search; Error-Bounded; Graph-based; Vector Database}

%\received{20 February 2007}
%\received[revised]{12 March 2009}
%\received[accepted]{5 June 2009}

\maketitle

\section{Introduction}
Nearest neighbor search on high-dimensional vector data is a fundamental problem and has many real-world applications, including image retrieval \cite{imagesearch1,imagesearch2}, large language models \cite{llm1,llm2} and recommendation system \cite{recommend2}. In simple terms, given a dataset \(V \subset \mathbb{R}^d\) and a query point \(q \in \mathbb{R}^d\), the goal of top-\(k\) NN search is to find the \(k\) points in \(V\) closest to \(q\) based on a distance metric, typically the Euclidean distance. To improve search efficiency, much of the current research focuses on Approximate Nearest Neighbor (ANN) search, which trades a small loss in accuracy for significant gains in speed. Nowadays, most ANN search methods rely on the \(\epsilon\)-Recall-Bounded principle, requiring a recall of more than \(1-\epsilon\) and comparing query-per-second (QPS) performance. However, this approach does not ensure error bounds for the returned results.

To enable error-bounded ANN Search, researchers have conducted various studies.
From a theoretical perspective, Har-Peled's work showed that an approximate Voronoi diagram can achieve a \((1+\epsilon)\)-approximation for top-1 NN queries in logarithmic time, though it requires space on the order of \(O(n/\epsilon^d)\)~\cite{avd}. Arya et al. later improved this by reducing the ANN problem to an approximate polytope membership query, achieving near-optimal space complexity \(O(n/\epsilon^{d/2})\) while maintaining logarithmic query time~\cite{avd2,avd3}. In the database field, methods like NSG~\cite{nsg} can guarantee finding the exact top-1 NN if the query point is part of the dataset (\(q \in V\)). Similarly, FANNG~\cite{fanng} and \(\tau\)-MG~\cite{taumg}  offer exact top-1 NN guarantees only when the distance to the true nearest neighbor is within a predefined threshold \(\tau\). While these approaches contribute to error-bounded ANN search, they are limited to top-1 NN problems, which is insufficient for practical applications.\textbf{ In this paper, we propose a graph-based method that achieves a \((1/\delta)\)-approximation for top-\(k\) nearest neighbor search with a space complexity of \(O(n \ln n)\).} To the best of our knowledge, this is the {first method} to provide an error-bounded guarantee for top-\(k\) (\(k > 1\)) ANN search.

Besides, to improve search efficiency, current mainstream methods involve vector quantization to accelerate distance calculations. Among these, the state-of-the-art method, SymphonyQG\cite{symqg}, has integrated RaBitQ\cite{rabitq} into a proximity graph. We also incorporate our proposed error-bounded proximity graph into quantization and introduce an exploration strategy that adaptively selects between approximate and exact distance computations. This significantly reduces the errors that can arise from using only approximate distances. \textbf{Experimental results show that our error-bounded approach achieves higher QPS than all existing state-of-the-art methods under the same precision requirements.}

To achieve efficient error-bounded ANN search, we first introduce the \(\delta\)-Error-Bounded Monotonic Graph ($\delta$-$\kwnospace{EMG}$). This is a proximity graph where a monotonic search ensures that the final node's distance from any query \(q\) is no more than $1/\delta$ times the true distance. We define the Occlusion Region on $\delta$-$\kwnospace{EMG}$, using it to select neighbors and construct the graph. During ANN search on this graph, finding a local optimum node (a node with no closer neighbors to query \(q\)) ensures the search is Error-Bounded. Furthermore, to enhance performance, we propose a quantized version of the proximity graph, \(\delta\)-EMQG, together with an exploration strategy that balances approximate and exact distance computations.

{Our contributions are summarized as follows:}

\stitleunderline{Models.} We introduce the $\delta$-Error-Bounded Monotonic Graph ($\delta$-$\kwnospace{EMG}$), a novel proximity graph model designed to provide formal, rank-aware search guarantees. The core promise of this model is that for any query $q$, the $i$-th returned neighbor is a provable $(1/\delta)$-approximation of the true $i$-th nearest neighbor, under the practical condition that a local optimum node is found during the search. This guarantee is upheld by the graph's fundamental navigational property: for any starting point, a simple monotonic search is guaranteed to lead into a target region around the query, which we term the $\delta$-neighborhood. This prevents the search from getting trapped in poor local optimum node far from the query, thus ensuring the error bound is met.

\stitleunderline{Algorithms.} 
We propose a suite of algorithms for constructing and searching the $\delta$-$\kwnospace{EMG}$. To realize the aforementioned error guarantees, we design a graph construction method based on a unique geometric occlusion rule. Our exact implementation of this rule yields a graph with an expected out-degree of $O(\log n)$ and $O(n \log n)$ space complexity. However, its $O(n^2 \log n)$ time complexity makes it impractical for large datasets.

To overcome this, we develop a highly efficient approximate construction algorithm that reduces the time complexity to near-linear (\(O(L n^{(d + 1)/d} \ln(n^{1/d}) / \Delta)\)) and the space complexity to $O(n)$. This method produces a graph that serves as a localized and degree-balanced approximation of a true $\delta$-$\kwnospace{EMG}$. Furthermore, we present a new search method for the $\delta$-$\kwnospace{EMG}$ that dynamically determines the candidate set size to ensure the desired error bounds.

Finally, we integrate vector quantization to create the $\delta$-$\kwnospace{EMQG}$, a quantized variant that utilizes RaBitQ \cite{rabitq} for highly efficient distance estimation. This is paired with a novel probing search strategy that intelligently balances fast, approximate distance calculations with precise, exact ones to maximize throughput while maintaining high accuracy.

\stitleunderline{Experiments.} We perform experiments on six real-world datasets to show the effectiveness and efficiency of our proposed methods. The results are compared with the existing methods such as NSG, HNSW, NGT-QG and SymphonyQG. Our algorithm achieves a single-thread QPS of 19,000 on the SIFT1M dataset. At 99\% recall with $k=1$, it outperforms the best baseline by 1.2x to 3.2x. The error analysis matches our theoretical proof, confirming the accuracy of our method. Additionally, we conduct scalability and ablation experiments to further validate the algorithm.

\section{Related Work}
The ANN search problem has been widely explored, resulting in various indexing and search strategies~\cite{survey1, survey2, survey3, annbenchmark}. Below, we categorize the existing work into Non-Graph-Based and Graph-Based approaches and introduce them separately.

\subsection{Non-Graph-Based ANN Search}
Traditional ANN methods fall into several major categories.
Tree-based methods \cite{rtree,mtree,kdtree} recursively divide the data space to support hierarchical search.
While effective in low-dimensional settings, their performance deteriorates sharply in high dimensions due to the curse of dimensionality.  
Hashing-based methods \cite{hash1,hash2,hash3} map nearby points to the same hash bucket with high probability, enabling sublinear search in expectation.  
Inverted file (IVF) based methods\cite{ivf} cluster the dataset and build an inverted index from cluster centroids to data points, drastically reducing the search space to a few promising clusters.
Quantization-based methods, such as Product Quantization (PQ) \cite{pq,opq,fastscan}, compress high-dimensional vectors by partitioning them into subspaces and quantizing each subspace independently.

\subsection{Graph-Based ANN Search}
Graph-based methods represent the state-of-the-art in ANN search. Given a set of vectors $V$, they construct a graph $G=(V,E)$ where nodes correspond to vectors and edges define neighborhood relationships among them. Such graphs are collectively referred to as proximity graphs \cite{pg}. Queries are answered by traversing the graph, typically using a greedy search, to find the approximate nearest neighbors.
Existing graph-based approaches can be broadly grouped into three main families.

\stitle{$k$-NN Graph Variants.}
$k$-Nearest Neighbor ($k$-NN) graph connects each node to its $k$ nearest neighbors. It serves as a practical approximation of the theoretically ideal but computationally expensive Delaunay Graph (DG)\cite{voronoi, voronoi_is_monotonic}. 
Unlike the DG, which becomes nearly dense in high dimensions, the $k$-NN graph remains sparse. It can be efficiently built using methods like NN-Descent, which exhibits an empirical complexity of about $O(N^{1.14})$ \cite{nndescent}.
The $k$-NN graph serves as the foundational structure for many well-known ANN search algorithms, including GNNS\cite{gnns}, IEH\cite{ieh} and DPG\cite{dpg}.

\stitle{Navigable Small-World (NSW) Models.}
Inspired by the small-world phenomenon\cite{nsw_exp}, this family of methods constructs graphs with both short- and long-range links to facilitate efficient greedy routing\cite{nsw_thm}. The Navigable Small World (NSW) algorithm~\cite{nsw} pioneered a practical, data-agnostic approach to building these graphs.
Hierarchical NSW (HNSW) algorithm~\cite{hnsw} extended NSW with a hierarchical multi-layer structure that supports coarse-to-fine traversal—rapidly covering large distances in sparse upper layers and refining locally in the dense base layer. To date, HNSW remains one of the most effective and widely adopted ANN algorithms, though it does not provide formal accuracy guarantees.

\stitle{RNG-Based Methods.}
RNG\cite{rng} is constructed by eliminating the longest edge in every triangle of points. This principle yields highly sparse graphs with a constant average degree\cite{pg}. While the pure RNG is not sufficient for effective navigation\cite{msnet}, it inspired the Monotonic RNG (MRNG)\cite{nsg}, a variant that guarantees finding the exact nearest neighbor for in-dataset queries via greedy search.
However, this guarantee fails for out-of-dataset queries. To address this, algorithms like FANNG\cite{fanng} and $\tau$-Monotonic Graph ($\tau$-MG) \cite{taumg} were developed, which provide exact search guarantees for queries that are very close (within a radius $\tau$) to the dataset, but reverting to heuristic behavior otherwise. LMG \cite{lmg} extends the guarantee of $\tau$-MG by preserving edges for a range of $\tau$ values. Other works such as Vamana\cite{diskann} and SSG\cite{ssg} have proposed heuristic modifications to the RNG rules to improve search performance. 

Additionally, significant effort has been invested in optimizing ANN search for modern hardware, leading to specialized systems for SSDs \cite{diskann, fresh_diskann, spann} and GPUs \cite{song,gpu}. Unlike previous work, ours is the first to offer an error-bounded guarantee for top-\(k\) (\(k > 1\)) ANN search and achieves high QPS during searches.

\section{Preliminaries}

In this section, we formally define the nearest neighbor search problem and establish the notation used throughout this paper. Default notations are listed in Table \ref{tab:notations}.

\begin{table}[t!]
	\centering
	\small
	\vspace{-3mm}
	\caption{Notations used in this paper}
	\vspace{-3mm}
	\label{tab:notations}
	\begin{tabular}{@{} l p{0.65\columnwidth} @{}}
		\toprule
		\textbf{Notation} & \textbf{Description} \\
		\midrule
		$\mathbb{R}^d$ & The $d$-dimensional real vector space. \\
		$V = \{v_1, v_2, \dots, v_n\}$  & A dataset of $n$ vectors in $\mathbb{R}^d$. \\
		$d(x, y), \|x - y\|$ & Euclidean distance between vectors $x, y$. \\
		$\tilde{d}(x, y)$ & Approximate Euclidean distance between $x, y$. \\
		$G=(V, E)$ & A Proximity Graph with vertices $V$ and edges $E$. \\
		$\mathcal{N}(v)$ & The set of neighbors of node $v$ in the graph $G$. \\
		$q$ & A query vector in $\mathbb{R}^d$. \\
		$N_k(q)$ & The exact top-$k$ nearest neighbors of $q$. \\
		$R_k(q)$ & The results of top-$k$ ANN search of $q$. \\
		$C[i]$ & The $i$-th nearest vector to $q$ in the set $C$. \\
		$C[1{:}l]$ & The subset of $C$ containing the $l$ nearest vectors to $q$ (or all of $C$ if $|C|<l$). \\
		\bottomrule
	\end{tabular}
\end{table}

\subsection{\underline{N}earest \underline{N}eighbor {\small (NN)} Search}

Let $V = \{v_1, v_2, \dots, v_n\}$ be a dataset of $n$ vectors in $\mathbb{R}^d$. The distance between any two vectors $x, y \in \mathbb{R}^d$ is measured by the Euclidean distance, denoted as $d(x,y) = \|x - y\|$.

\begin{definition}[top-$k$ NN Search]
	\label{def:knn_search}
	Given a dataset $V$ and a query vector $q \in \mathbb{R}^d$, the goal of top-$k$ Nearest Neighbor (NN) search is to find a set $N_k(q) \subset V$ of size $k$ such that for any vector $v \in N_k(q)$ and any vector $v' \in V \setminus N_k(q)$, the inequality $d(q,v) \leq d(q,v')$ holds. In cases of distance ties, the selection of $N_k(q)$ is arbitrary.
\end{definition}

For subsequent definitions, we treat $N_k(q)$ as a sequence $(v_{(1)}, \allowbreak v_{(2)}, \allowbreak \dots, v_{(k)})$ ordered by non-decreasing distance to $q$. Solving the exact $k$-NN problem via a linear scan has a computational cost of $O(nd)$, which is prohibitive for large-scale and high-dimensional datasets. This has motivated the development of approximate solutions that offer a trade-off between search accuracy and efficiency.

\subsection{\underline{A}pproximate \underline{N}earest \underline{N}eighbor {\small (ANN)} Search}

Due to the high computational cost of exact Nearest Neighbor search and the fact that real-world applications often do not require highly accurate k-nearest neighbors, most current research focuses on Approximate Nearest Neighbor (ANN) search. To facilitate performance comparisons, the field has largely standardized on an $\epsilon$-Recall-Bounded evaluation paradigm, which can be defined as follows:

\begin{definition}[top-$k$ $\epsilon$-Recall-Bounded ANN Search]
	\label{def:rbanns}
	Given a dataset $V$, a query vector $q \in \mathbb{R}^d$ and a parameter $\epsilon \in (0, 1)$, the goal of $\epsilon$-Recall-Bounded ANN Search is to return a set of $k$ vectors $R_k(q)$, that satisfies $recall(R_k(q)) = \frac{R_k(q)\cap N_k(q)}{|N_k(q)|} = \frac{R_k(q)\cap N_k(q)}{k}$, is at least $1-\epsilon$.
\end{definition}

While the $\epsilon$-Recall-Bounded metric is widely adopted in the literature \cite{nsg,hnsw,symqg}, its fundamental drawback is its exclusive focus on retrieving true positives. It does not guarantee the error margin in cases of mismatches. In this paper, we focus on a novel approach called $\delta$-Error-Bounded ANN search, which ensures that every returned result has an error bound relative to the ground truth neighbor.

\begin{definition}[top-$k$ $\delta$-Error-Bounded ANN Search]
	\label{def:ebanns}
	Given a dataset $V$, a query vector $q$, and an approximation factor $\delta \in (0, 1)$, the goal is to return a sequence of $k$ vectors, $R_k(q) = (r_{(1)}, r_{(2)}, \dots, r_{(k)})$, ordered by non-decreasing distance to $q$, that satisfies the following condition for all $i \in \{1, \dots, k\}$:
	$ d(q, r_{(i)}) \le (1/\delta) \cdot d(q, v_{(i)}) $.
\end{definition}

Our rank-aware definition is more practical than the stricter alternative of requiring $d(q, r_{(i)}) \le (1/\delta) \cdot d(q, v_{(1)}) $ for all $i$ \cite{nsg,ssg}, a condition that rapidly becomes infeasible as $k$ increases.

\stitle{Hardness of achieving $\delta$-Error-bounded ANN Search.}
Achieving the guarantee in Definition~\ref{def:ebanns} for arbitrary queries in high-dimensional space is a notoriously difficult problem. The core difficulty lies in the fact that most existing index structures are constructed based solely on the dataset $V$. Their geometric and navigational properties are therefore defined relative to the data points themselves, not for the entire continuous space $\mathbb{R}^d$.

Consequently, even the most advanced provable methods offer only limited guarantees, as their guarantees typically hinge on restrictive assumptions about the query's location relative to the dataset \cite{nsg,fanng,taumg}. They fall short of providing a universal error bound for an arbitrary 1-ANN query, let alone the rank-aware guarantee for $k$-ANN. Our work is primarily motivated by addressing this limitation.

\section{How to Achieve $\delta$-Error-bounded ANN Search}

\subsection{ANN Search on Proximity Graph}
Graph-based ANN search methods construct a proximity graph over the dataset $V$, where the search is performed via graph traversal. Algorithm~\ref{alg:greedy_search} outlines a common greedy strategy for this purpose.

The algorithm maintains a candidate set $C$ and a set $T$ of visited nodes. Vectors in $C$ are sorted by their ascending distance to the query $q$. We use the notation $C[1:l]$ to denote the top-$l$ candidates in $C$. Correspondingly, $C[i]$ denotes the $i$-th candidate in this sorted list. The algorithm iteratively selects the closest unvisited node from candidate set $C$, adds its neighbors to the set, and prunes the set to maintain the top-$l$ best candidates. This process terminates when all nodes within $C[1:l]$ have been visited.

\begin{algorithm}[t!]
	\caption{ANN search on a (Proximity) Graph}
	\label{alg:greedy_search}
	\begin{algorithmic}[1]
		\Require (Proximity) Graph $G=(V,E)$, query vector $q$, result size $k$, start node $v_s \in V$, candidate set size $l$
		\Ensure $R_k(q)$: a set of $k$ approximate nearest neighbors of $q$
		
		\State candidate set $C \gets \{v_s\}$; visited set $T \gets \varnothing$;
		\While{ $\exists u \in C[1{:}l], u \notin T$ }
		\State $u \gets \arg\min_{u \in C[1{:}l] \setminus T}d(q,u)$;
		\State $T \gets T \cup \{u\}$;
		\ForAll{$v \in \mathcal{N}(u) \setminus T$}
		\State $C \gets C \cup \{v\}$;
		\EndFor
		\State Prune $C$ to retain the top $l$ candidates closest to $q$;
		\EndWhile
		\State \Return $C[1{:}k]$;
	\end{algorithmic}
\end{algorithm}

This search algorithm is fundamentally a heuristic. Without a specially designed graph structure, there is no formal assurance that the search will not terminate in a poor local optimum, failing to find high-quality neighbors.

\subsection{The Proximity Graph to be Monotonic} 
To prevent searches from terminating in such local optima, the graph itself must possess stronger navigational properties. A key property is monotonicity, which guarantees that for any destination node, a path of progressively closer nodes always exists. This concept was originally introduced as the Monotonic Search Network (MSNET) in \cite{msnet}. To formalize this, we first define a monotonic path with respect to an arbitrary query.

\begin{definition}[Monotonic Path]
	\label{def:monopath}
	Given a directed graph $G=(V, E)$ defined on vector set $V \subset \mathbb{R}^d$, and a query $q \in \mathbb{R}^d$, a path $(v_1, v_2, \dots, v_k)$ in $G$ is monotonic with respect to $q$ if the distance to $q$ strictly decreases at every step. That is, for all $i \in \{1, \dots, k-1\}$, the condition $d(q, v_i) > d(q, v_{i+1})$ holds.
\end{definition}

Using this definition, we can now define a monotonic graph.

\begin{definition}[Monotonic Graph]
	\label{def:msnet}
	A directed graph $G=(V, E)$ is a monotonic graph if for any two distinct nodes $u, v \in V$, there exists a monotonic path from $u$ to $v$ in $G$ with respect to $v$.
\end{definition}

\begin{figure}[t]
	\centering
	\begin{subfigure}[b]{0.21\textwidth}
		\centering
		\includegraphics[width=\linewidth]{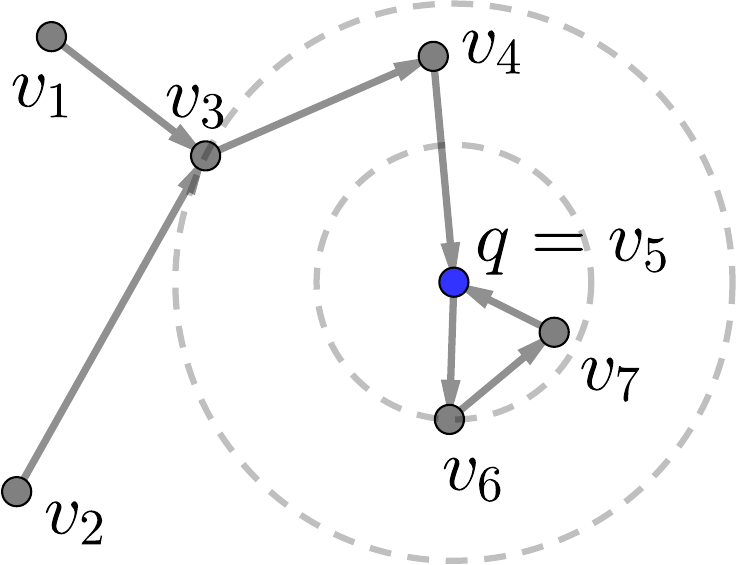}
		\caption{Monotonic Graph}
		\label{fig:msnet}
	\end{subfigure}
	\hspace{0.04\textwidth}
	\begin{subfigure}[b]{0.19\textwidth}
		\centering
		\includegraphics[width=\linewidth]{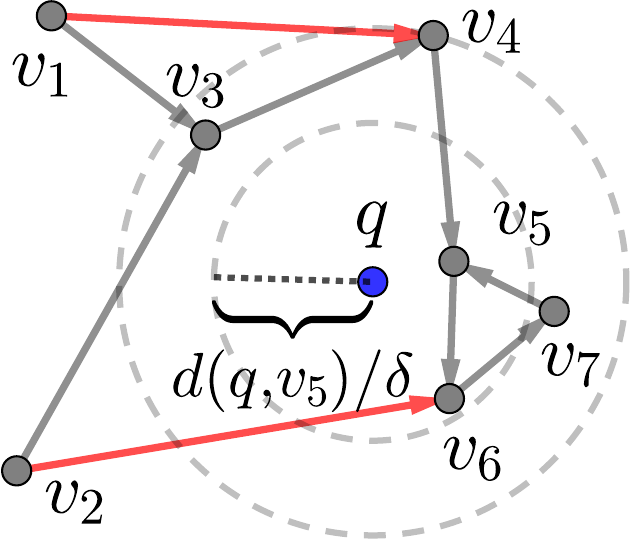}
		\caption{$\delta\text{-}\kwnospace{EMG}$}
		\label{fig:delta-msnet}
	\end{subfigure}
	\vspace{-3mm}
	\caption{A comparison of Monotonic Graph and $\delta\text{-}\kwnospace{EMG}$. $v_1, \dots, v_7$ are nodes in the graph and $q$ is the query. The red edges indicate additional links introduced in $\delta\text{-}\kwnospace{EMG}$ compared to the Monotonic Graph.}
	% A Monotonic Graph guarantees a monotonic path to a query only when the query is itself a node in the graph (e.g., $q=v_5$). $\delta$-MSNET extends this guarantee to any arbitrary query $q $, ensuring a monotonic path from any node into the query's $\delta$-neighborhood (the inner circle).
	\label{fig:msnets}
\end{figure}

\begin{definition}[Monotonic Top-1 Search]
	\label{def:mono_search}
	Given a directed graph $G=(V, E)$, query vector $q$ and start node $v_s \in V$, a monotonic top-1 search iteratively moves from the current node $v_t$ to its neighbor closest to $q$:
	$$ v_{t+1} = \arg\min_{u \in \mathcal{N}(v_t)} d(q, u), \quad v_1=v_s $$
	where \( \mathcal{N}(v_t) \) denotes the neighbors of $v_t$.
	The process terminates when it reaches a node $v_f$ that is closer to $q$ than all of its neighbors. Such a node $v_f$ is called a local optimum node with respect to $q$.
\end{definition}

\begin{theorem}
	\label{thm:search_guaranteed}
	For any query $q \in V$, the monotonic top-1 search on a monotonic graph $G=(V, E)$ is guaranteed to terminate at $q$.
\end{theorem}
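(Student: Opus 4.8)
The plan is to establish two facts in sequence: that the search terminates after finitely many steps, and that the node at which it terminates is necessarily $q$ itself. Both follow from the strict-decrease property of the update rule together with the defining property of a monotonic graph (Definition~\ref{def:msnet}).

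First I would argue termination. By construction, at each step the search either stops (when the current node $v_t$ is a local optimum with respect to $q$) or moves to the neighbor $v_{t+1} = \arg\min_{u \in \mathcal{N}(v_t)} d(q,u)$, and this move is taken only when $d(q, v_{t+1}) < d(q, v_t)$. Hence the sequence $d(q, v_1) > d(q, v_2) > \cdots$ is strictly decreasing, so the visited nodes $v_1, v_2, \dots$ are pairwise distinct. Since $V$ is finite, this chain cannot be infinite, so the search reaches some local optimum $v_f$ after finitely many steps (including the trivial case $v_s = q$, which terminates in zero steps). One should also note that the update rule is always well-defined before termination: if $v_t \neq q$, then since $q \in V$, Definition~\ref{def:msnet} guarantees a monotonic path from $v_t$ to $q$, whose first edge shows $\mathcal{N}(v_t) \neq \varnothing$, so there are no dead ends.

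Second I would show $v_f = q$ by contradiction. Suppose $v_f \neq q$. Because $q \in V$ by hypothesis, $v_f$ and $q$ are two distinct nodes of $G$, so Definition~\ref{def:msnet} gives a monotonic path $(v_f = u_1, u_2, \dots, u_m = q)$ with respect to $q$. Since $v_f \neq q$ we have $m \ge 2$, and by Definition~\ref{def:monopath} the first step satisfies $d(q, u_2) < d(q, u_1) = d(q, v_f)$. But $u_2 \in \mathcal{N}(v_f)$, contradicting the fact that $v_f$ is closer to $q$ than all of its neighbors. Hence $v_f = q$; and conversely $q$ is a legitimate termination point, since $d(q,q) = 0 < d(q,u)$ for every neighbor $u \neq q$.

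There is no substantive obstacle here; the only points needing care are bookkeeping ones — verifying the update rule never gets stuck at a node with no outgoing edges before reaching $q$, and making explicit that the hypothesis $q \in V$ is exactly what allows invoking the monotonic-graph property with $q$ as an endpoint. If one reads ``local optimum'' with a non-strict inequality (allowing distance ties among neighbors), the same contradiction still applies, because the monotonic path always supplies a strictly closer neighbor.
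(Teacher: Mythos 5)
Your proof is correct and takes essentially the same approach as the paper's: a contradiction argument showing that if the search terminates at $v_f \neq q$, the monotonic path from $v_f$ to $q$ (guaranteed since $q \in V$) supplies a strictly closer neighbor, contradicting local optimality. The paper's proof treats termination itself as implicit in Definition~\ref{def:mono_search}; your extra care in verifying finiteness and non-emptiness of neighborhoods is sound bookkeeping but not a different argument.
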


\begin{proof}
	Assume the search terminates at node $ v \neq q $.  
	Since $q \in V$ and $G$ is monotonic, there exists at least one monotonic path from $v$ to $q$.  
	Let $u$ be the immediate successor of $v$ on this path.  
	By Definition~\ref{def:monopath}, $d(q, u) < d(q, v)$, contradicting the termination condition that $v$ is a local optimum node.  
	Hence, the search must terminate at $v = q$.
\end{proof}

\begin{example}
	Figure~\ref{fig:msnet} illustrates the navigational property of a Monotonic Graph. The directed edges depict a subgraph that guarantees monotonic paths toward the destination node $q=v_5$. The definition of a monotonic graph guarantees that such a navigable structure exists between any pair of nodes. 
	However, consider a case where $q \ne v_5$, as shown in Figure~\ref{fig:delta-msnet}. In this case, using only the edges in the Monotonic Graph (gray edges in the figure), monotonic searches starting from $v_1$, $v_2$, or $v_3$ fail to reach $v_5$ because $v_3$ acts as a local optimum node.
\end{example}

Theorem~\ref{thm:search_guaranteed} provides a strong guarantee of finding the exact nearest neighbor, but it holds only when the query is a dataset point and only for the 1-NN case. Real-world scenarios, however, involve arbitrary queries that are not in the dataset and often require finding the top-$k$ neighbors.

This motivates the development of our framework, the $\delta$-Error-Bounded Monotonic Graph ($\delta\text{-}\kwnospace{EMG}$), which guarantees a top-$k$ $\delta$-error-bounded ANN search for arbitrary queries.

\subsection{Error-Bounded Monotonic Graph ($\delta\text{-}\kwnospace{EMG}$)}

This chapter introduces our primary contribution: a novel theoretical framework and a corresponding graph structure that provide provable error bounds for any arbitrary query $q \in \mathbb{R}^d$. We first define the $\delta$-Neighborhood for a query $q$ below.

\begin{definition}[$\delta$-Neighborhood]
	\label{def:delta_neighborhood}
	Given a dataset $V$ and a query $q$, let $v_{(1)} = \arg\min_{v \in V} d(q, v)$ be the true nearest neighbor of $q$. For a parameter $\delta \in (0, 1)$, the $\delta$-neighborhood of $q$ is defined as the closed ball centered at $q$ with radius $d(q, v_{(1)})/\delta$:
	$$ \{ x \in \mathbb{R}^d \mid d(q, x) \le \frac{1}{\delta} \cdot d(q, v_{(1)}) \} $$
\end{definition}

Based on this definition, we generalize the concept of $\delta$-error-bounded monotonic graph as follows.

\begin{definition}[$\delta$-Error-Bounded Monotonic Graph]
	\label{def:delta_emg}
	Given a dataset $V$ and $\delta \in (0, 1)$. A proximity graph $G=(V,E)$ is a $\delta$-error-bounded monotonic graph ($\delta\text{-}\kwnospace{EMG}$) if for any query $q \in \mathbb{R}^d$ and any starting node $v_s \in V$, there exists a monotonic path $(v_1, \dots, v_k)$ in $G$ such that $v_1 = v_s$ and $v_k$ lies within the $\delta$-neighborhood of $q$.
\end{definition}

\begin{example}
Figure \ref{fig:delta-msnet} illustrates this concept. The inner circle represents the $\delta$-neighborhood of $q$. While most nodes (e.g., $v_1, v_3, v_4, v_7$) have monotonic paths directly to the true nearest neighbor $v_5$, node $v_2$ has a monotonic path to $v_6$. Since $v_6$ lies within the $\delta$-neighborhood, the condition is satisfied for all nodes.
\end{example}

The navigational properties of a $\delta\text{-}\kwnospace{EMG}$ provide approximation guarantees for both top-1 ANN and top-$k$ ANN search.

\textbf{Error Bound for 1-NN:} $\delta\text{-}\kwnospace{EMG}$ guarantees to find a $(1/\delta)$-approximate nearest neighbor for any query.

\begin{theorem}
	\label{prop:delta-msnet-guarantee}
	For any query $q\in \mathbb{R}^d$, the monotonic top-1 search on a $\delta\text{-}\kwnospace{EMG}$ is guaranteed to return a node $r \in V$ that is a $(1/\delta)$-approximation of the true nearest neighbor $v_{(1)}$:
	$$ d(q,r) \leq \frac{1}{\delta} \cdot d(q,v_{(1)}) $$
\end{theorem}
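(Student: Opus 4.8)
The plan is to combine two facts: (i) the monotonic top-1 search always terminates at a local optimum node $v_f$ with respect to $q$, and (ii) the defining property of a $\delta\text{-}\kwnospace{EMG}$ forbids any local optimum from lying outside the $\delta$-neighborhood of $q$. The returned node $r := v_f$ is then automatically a $(1/\delta)$-approximation of $v_{(1)}$ by Definition~\ref{def:delta_neighborhood}.

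First I would establish termination. By Definition~\ref{def:mono_search}, each step of the search replaces the current node $v_t$ by a neighbor $v_{t+1}$ with $d(q, v_{t+1}) < d(q, v_t)$, so the sequence of distances to $q$ along the search path is strictly decreasing. Since $V$ is finite, no node can repeat, and the search must halt after finitely many steps at some node $v_f$ that is strictly closer to $q$ than all of its neighbors, i.e.\ a local optimum node with respect to $q$.

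Next comes the core step: showing $v_f$ lies in the $\delta$-neighborhood of $q$, which I would argue by contradiction. Suppose $d(q, v_f) > \tfrac{1}{\delta}\, d(q, v_{(1)})$, so that $v_f$ is outside the $\delta$-neighborhood. Apply the $\delta\text{-}\kwnospace{EMG}$ property (Definition~\ref{def:delta_emg}) with starting node $v_s = v_f$: there exists a monotonic path $(u_1, u_2, \dots, u_m)$ in $G$ with $u_1 = v_f$ and $u_m$ inside the $\delta$-neighborhood. Because $u_1 = v_f$ is outside the $\delta$-neighborhood while $u_m$ is inside, we have $u_1 \neq u_m$, hence $m \ge 2$ and the successor $u_2$ exists. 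Monotonicity of the path gives $d(q, u_2) < d(q, u_1) = d(q, v_f)$, and $u_2 \in \mathcal{N}(v_f)$ since it is the immediate successor of $v_f$ on the path. Thus $v_f$ has a neighbor strictly closer to $q$ than itself, contradicting that $v_f$ is a local optimum node.

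Therefore $d(q, v_f) \le \tfrac{1}{\delta}\, d(q, v_{(1)})$, and taking $r = v_f$ finishes the proof. The only real subtlety, and what I would treat as the \emph{main obstacle}, is being careful with the boundary cases: ensuring the search genuinely terminates (handled by finiteness together with strict monotonicity) and observing that the monotonic path supplied by the $\delta\text{-}\kwnospace{EMG}$ definition must have length at least two precisely because its endpoint lies in the $\delta$-neighborhood while, under the contradiction hypothesis, its start point $v_f$ does not. Everything else is a direct unwinding of Definitions~\ref{def:mono_search} and~\ref{def:delta_emg}.
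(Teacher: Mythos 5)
Your proof is correct and takes essentially the same contradiction argument as the paper: assume the terminating local optimum lies outside the $\delta$-neighborhood, invoke Definition~\ref{def:delta_emg} to produce a monotonic path from it into the $\delta$-neighborhood, and read off a strictly closer neighbor from the first step of that path, contradicting local optimality. You are slightly more careful than the paper in two small places — spelling out termination of the search and noting that the guaranteed path must have length at least two (so a next hop exists) because its start is outside the $\delta$-neighborhood while its end is inside — but these are minor tightenings of the same argument, not a different route.
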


\begin{proof}
	Let $r \in V$ be the node where the monotonic search terminates. By the Definition~\ref{def:mono_search}, $r$ is a local optimum node, meaning it has no neighbor in $G$ that is closer to $r$. Assume that the proposition is false, i.e., $d(q,r) > \frac{1}{\delta} \cdot d(q,v_{(1)})$. This would imply that $r$ lies outside the $\delta$-neighborhood of $q$. However, by the definition of a $\delta\text{-}\kwnospace{EMG}$, for any node outside the $\delta$-neighborhood, there must exist a monotonic path starting from it. The existence of such a path requires $r$ to have a neighbor $r'$ such that $d(q, r') < d(q, r)$. This directly contradicts the fact that $r$ is a local optimum node. Therefore, the assumption must be false, and we must have $d(q,r) \leq \frac{1}{\delta} \cdot d(q,v_{(1)})$.
\end{proof}

\textbf{Error Bound for $k$-NN:} For any query and our proposed search strategy in Section \ref{sec:anns_emg}, the $\delta\text{-}\kwnospace{EMG}$ ensures a top-$k$ $\delta$-error-bounded ANN search when there is a locally optimum node in the candidate set. This condition is frequently met in our experiments in Sec.\ref{sec:effect}. The detailed proof can be seen at Theorem \ref{thm:kann_guarantee} in Sec.\ref{sec:anns_emg}.

Based on the above properties, \textbf{the key challenge of achieving \(\delta\)-error-bounded ANN search has now shifted to the problem of constructing a $\delta\text{-}\kwnospace{EMG}$}. Below, we present the algorithms to construct this graph and to perform efficient ANN search upon it.

\section{Algorithms for $\delta\text{-}\kwnospace{EMG}$}

The core challenge lies in constructing a graph that satisfies the $\delta\text{-}\kwnospace{EMG}$ property. A naive approach would require verifying an infinite set of queries in $\mathbb{R}^d$ to ensure a monotonic path exists into each query's $\delta$-neighborhood, which is computationally intractable. The solution, therefore, is to translate this global guarantee into a practical, local rule that operates only on the finite data points in $V$. The following construction achieves this by introducing a carefully designed geometric occlusion condition.

\subsection{Construction of $\delta\text{-}\kwnospace{EMG}$}

We begin with the central element of our construction: a unique edge occlusion rule that determines the graph's topology.

\begin{definition}[Occlusion Region of $\delta\text{-}\kwnospace{EMG}$]
\label{def:delta_occlusion_region}
	For vectors $u, v \in \mathbb{R}^d$ and $\delta \in (0, 1)$, the Occlusion Region of $\delta\text{-}\kwnospace{EMG}$ is defined as:
	\begin{multline*}
		\text{Occlusion}_{\delta}(u,v) = \{ x \in \mathbb{R}^d \mid d(x, u) < d(u, v) \\
		\text{ and } d^2(x, v) + 2\delta \cdot d(u, v) \cdot d(x, u) < d^2(u,v) \}
	\end{multline*}
\end{definition}

\begin{figure}[t]\vspace{-3mm}
	\centering
	\includegraphics[width=0.45\columnwidth]{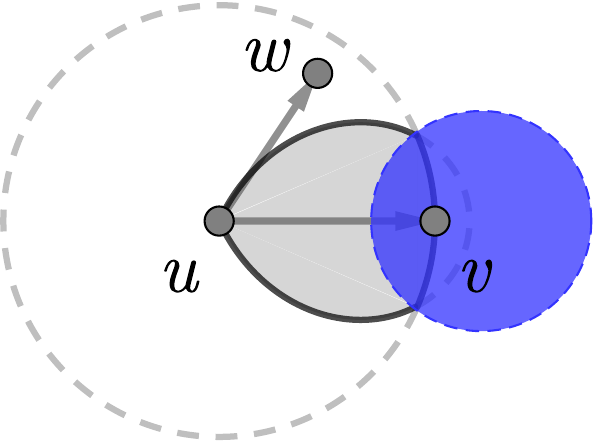}
	\vspace{-3mm}
	\caption{The construction rule for $\delta\text{-}\kwnospace{EMG}$. The gray shaded area represents the Occlusion Region for $(u,v)$. For any query $q$ inside blue shaded area, $u$ is guaranteed to have a neighbor closer to $q$.}
	\label{fig:occlusion}
\end{figure}

As illustrated in Figure \ref{fig:occlusion}, the occlusion region is the intersection of an open ball centered at $u$ and a teardrop-shaped volume with its cusp at $v$. The parameter $\delta$ modulates its geometry: as $\delta \to 0$, the region expands towards the lune of an MRNG. Conversely, as $\delta \to 1$, the region contracts significantly.

This specific geometry induces a "navigable region" around $v$ (the blue area in Figure~\ref{fig:occlusion}). For any query $q$ within this region, any vector $w$ in $\text{Occlusion}_{\delta}(u,v)$ is guaranteed to be closer to $q$ than $u$ is. The following lemma formalizes this.

\begin{lemma}
\label{lemma:region_property}
For any vectors $u, v$ and any $w \in \text{Occlusion}_{\delta}(u,v)$, any query $q \in \mathbb{R}^d$ that satisfies the condition $d(q,v) < \delta \cdot d(q,u)$, the inequality $d(q,w) < d(q,u)$ holds.
\end{lemma}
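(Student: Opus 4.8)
The plan is to convert the statement into a single scalar inequality and then exploit the fact that the set of admissible queries is exactly a Euclidean ball, over which a linear functional is trivial to minimise. I would first dispose of the degenerate case $d(u,v)=0$ (then $\mathrm{Occlusion}_\delta(u,v)=\varnothing$ and there is nothing to prove) and translate so that $u=\mathbf 0$; write $a=\|v\|=d(u,v)$. Expanding squared norms turns the three ingredients into inner-product inequalities: the goal $d(q,w)<d(q,u)$ is equivalent to $\langle q,w\rangle>\tfrac12\|w\|^2$; the defining inequality $d^2(w,v)+2\delta a\,d(w,u)<a^2$ of the occlusion region becomes $\langle w,v\rangle>\tfrac12\|w\|^2+\delta a\|w\|$ (which in particular forces $w\neq\mathbf 0$); and the hypothesis $d(q,v)<\delta\,d(q,u)$ becomes $\|q-v\|^2<\delta^2\|q\|^2$.

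The crux is to rewrite this last inequality, using $0<\delta<1$, as the statement that $q$ lies in the open Apollonius ball $B(O,R)$ with centre $O=v/(1-\delta^2)$ and radius $R=\delta a/(1-\delta^2)$ — a routine completion of squares. Since the functional $z\mapsto\langle z,w\rangle$ is linear, Cauchy--Schwarz gives, for every $q\in B(O,R)$, the bound $\langle q,w\rangle>\langle O,w\rangle-R\|w\|=\bigl(\langle v,w\rangle-\delta a\|w\|\bigr)/(1-\delta^2)$. The occlusion inequality supplies $\langle v,w\rangle-\delta a\|w\|>\tfrac12\|w\|^2$, and since $1-\delta^2<1$ this already yields $\langle O,w\rangle-R\|w\|>\tfrac12\|w\|^2$; chaining the two bounds gives $\langle q,w\rangle>\tfrac12\|w\|^2$, which is exactly the reformulated conclusion.

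I expect the main obstacle to be spotting \emph{this} reduction rather than a weaker one. A more direct attempt — bounding $\|q-v\|$ by $\delta a/(1-\delta)$ via the triangle inequality and then applying Cauchy--Schwarz to $\langle q-v,w\rangle$ — decouples the length of $q-v$ from its direction, whereas the hypothesis $\|q-v\|<\delta\|q\|$ couples them; the resulting estimate falls short of $\tfrac12\|w\|^2$ and does not close. Recognising $\{q:\|q-v\|<\delta\|q\|\}$ as a ball, so that the worst-case query is found by minimising a linear functional over it, is precisely what makes the estimate tight. The remaining points are routine bookkeeping: the strictness of the final inequality is automatic because $q$ ranges over the \emph{open} ball; the cases $d(u,v)=0$ and $w=u$ are handled above; and, notably, the first occlusion condition $d(w,u)<d(u,v)$ is never used in this lemma.
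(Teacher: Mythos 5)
Your argument is correct and is essentially the same as the paper's: translate $u$ to the origin, reduce the conclusion to $\langle q,w\rangle>\tfrac12\|w\|^2$, complete the square to identify the admissible queries as the open ball $B\bigl(v/(1-\delta^2),\,\delta\|v\|/(1-\delta^2)\bigr)$, and lower-bound the linear functional $q\mapsto\langle q,w\rangle$ over that ball via the occlusion inequality. The small remarks you add (the degenerate case $d(u,v)=0$, the fact that $w\neq\mathbf 0$, and the observation that the first clause $d(w,u)<d(u,v)$ of the occlusion region is never invoked in this lemma) are all accurate but do not change the underlying route.
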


\begin{proof}
Without loss of generality, we can translate the coordinate system such that $u = \mathbf{0}$, where $\mathbf{0}$ is the zero vector.  The inequality to be proven, $d(q,w) < d(q,u)$ is equivalent to $\|q - w\|^2 < \|q\|^2$, which reduces to showing:
$$ 2q \cdot w > \|w\|^2 $$

First, we characterize the region containing the query $q$. The condition $d(q,v) < \delta \cdot d(q,u)$ becomes $\|q - v\| < \delta \|q\|$. Squaring both sides and expanding the expression yields:
$$ (1-\delta^2)\|q\|^2 - 2q \cdot v + \|v\|^2 < 0 $$

For $\delta \in (0, 1)$, completing the square with respect to $q$ gives:
$$ \left\|q - \frac{v}{1-\delta^2}\right\|^2 < \left(\frac{\delta\|v\|}{1-\delta^2}\right)^2 $$

This demonstrates that $q$ must lie within an open hypersphere $B(c, R)$ with center $c = \frac{v}{1-\delta^2}$ and radius $R = \frac{\delta\|v\|}{1-\delta^2}$.

To prove that $2q \cdot w > \|w\|^2$ for all $q \in B(c, R)$, it is sufficient to show that the inequality holds for the minimum value of the linear function $2q \cdot w$ over the ball. This minimum is attained on the boundary of the ball in the direction opposite to $w$, resulting in:
$$ \min_{q \in B(c,R)} (2q \cdot w) = 2(c \cdot w - R\|w\|) = \frac{2}{1-\delta^2} (v \cdot w - \delta \|v\| \|w\|) $$

Next, we use the condition on $w$. Since $w \in \text{Occlusion}_{\delta}(\mathbf{0},v)$, it must satisfy the second inequality from Definition \ref{def:delta_occlusion_region}:
$$ \|w-v\|^2 + 2\delta \|v\| \|w\| < \|v\|^2 $$

Expanding $\|w-v\|^2$ and simplifying this expression yields:
$$ 2(v \cdot w - \delta \|v\| \|w\|) > \|w\|^2$$

Finally, we substitute this result into our expression for the minimum of $2q \cdot w$:
$$ \min_{q \in B(c,R)} (2q \cdot w) > \frac{\|w\|^2}{1-\delta^2} \ge \|w\|^2 $$

This confirms that $2q \cdot w > \|w\|^2$ holds for any valid query $q$ in the navigable region, thus concluding the proof.
\end{proof}

We now give the construction of $\delta\text{-}\kwnospace{EMG}$ using the occlusion region.

\begin{theorem}
\label{thm:construct_of_deltaemg}
Given a dataset $V \subset \mathbb{R}^d$ and $\delta \in (0, 1)$, a directed graph $G=(V,E)$ is $\delta\text{-}\kwnospace{EMG}$ if for any pair of nodes $(u,v) \notin E$, there exists an edge $(u,w) \in E$ such that $w \in \text{Occlusion}_{\delta}(u,v)$. 
\end{theorem}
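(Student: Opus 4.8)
The plan is to build the required monotonic path greedily and to show that, thanks to the occlusion hypothesis together with Lemma~\ref{lemma:region_property}, the construction can never stall before reaching the $\delta$-neighborhood. Fix a query $q \in \mathbb{R}^d$, a start node $v_s \in V$, and let $v_{(1)} = \arg\min_{v \in V} d(q,v)$ be the true nearest neighbor. I would construct a path $v_1 = v_s, v_2, v_3, \dots$ iteratively while maintaining the invariant $d(q, v_{t+1}) < d(q, v_t)$. At the current node $v_t$: if $v_t$ already lies in the $\delta$-neighborhood of $q$, i.e.\ $d(q, v_t) \le \tfrac1\delta\, d(q, v_{(1)})$, the path $(v_1, \dots, v_t)$ is complete and we stop (this also covers the degenerate situations $v_s = q$ or $v_s = v_{(1)}$, where the path has length one and the monotonicity condition of Definition~\ref{def:monopath} is vacuous). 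Otherwise $d(q, v_t) > \tfrac1\delta\, d(q, v_{(1)})$, equivalently $d(q, v_{(1)}) < \delta \cdot d(q, v_t)$; note in particular $d(q,v_t)>0$ and $v_t \ne v_{(1)}$. This is exactly the query condition of Lemma~\ref{lemma:region_property} with $u = v_t$ and $v = v_{(1)}$.

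Next I would produce the successor $v_{t+1}$ by applying the theorem's hypothesis to the pair $(v_t, v_{(1)})$. If $(v_t, v_{(1)}) \in E$, set $v_{t+1} = v_{(1)}$; since $\delta < 1$ and $d(q,v_t) > 0$ we get $d(q, v_{(1)}) < \delta\, d(q, v_t) < d(q, v_t)$, so the monotonicity invariant holds, and $v_{t+1} = v_{(1)}$ trivially lies in the $\delta$-neighborhood, so the path terminates. If instead $(v_t, v_{(1)}) \notin E$, the hypothesis yields an edge $(v_t, w) \in E$ with $w \in \text{Occlusion}_{\delta}(v_t, v_{(1)})$; set $v_{t+1} = w$. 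Because the query condition $d(q, v_{(1)}) < \delta\, d(q, v_t)$ holds, Lemma~\ref{lemma:region_property} gives $d(q, w) < d(q, v_t)$, so the invariant is preserved and the path genuinely uses an edge of $G$ (and $w \ne v_t$ is forced by this strict inequality).

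Finally I would close with a finiteness/termination observation: the sequence $d(q, v_1) > d(q, v_2) > \cdots$ is strictly decreasing, so no vertex of the finite set $V$ can repeat, and the construction must halt after at most $|V|$ steps. By the case analysis above, the only way it can halt is by reaching a node inside the $\delta$-neighborhood of $q$ — either because the running node already lies there, or because we took the direct edge to $v_{(1)}$. Hence the produced path $(v_1,\dots,v_k)$ has $v_1 = v_s$, is monotonic with respect to $q$, and satisfies $v_k$ in the $\delta$-neighborhood, which is precisely the requirement of Definition~\ref{def:delta_emg}.

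As for the main obstacle: the essential geometric content is already carried by Lemma~\ref{lemma:region_property}, so what remains is largely bookkeeping. The one point that needs genuine care is ruling out that the greedy walk gets \emph{stuck} at a node outside the $\delta$-neighborhood; this is where it matters that the hypothesis is stated for \emph{every} non-edge $(u,v)$, so that it applies to the pair $(v_t, v_{(1)})$ for the running node $v_t$ and always supplies a strictly improving move. A secondary subtlety is the explicit handling of the boundary cases ($q \in V$, $v_s$ already in the neighborhood, $v_t = v_{(1)}$, or $w = v_t$), none of which cause difficulty but all of which should be addressed so the strict-decrease argument is airtight.
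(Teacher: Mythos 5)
Your proof is correct and follows essentially the same route as the paper's: for the current node outside the $\delta$-neighborhood, examine the pair $(v_t, v_{(1)})$, case on whether that edge exists, invoke Lemma~\ref{lemma:region_property} in the occluded case to obtain a strictly closer neighbor, and conclude by finiteness of $V$ with strictly decreasing distances. The extra care you take with degenerate cases ($v_s$ already in the neighborhood, $v_t = v_{(1)}$, path of length one) is welcome bookkeeping but does not change the substance of the argument.
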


\begin{proof}
	Let $v_{(1)} = \arg\min_{v \in V} d(q, v)$ be the true nearest neighbor of a given query $q$. Consider any node $u \in V$ that lies outside the $\delta$-neighborhood of $q$, i.e., $d(q, u) > \frac{1}{\delta} \cdot d(q, v_{(1)})$.
	
	Now, consider the pair of nodes $(u, v_{(1)})$. According to the construction rule, there are two mutually exclusive possibilities for the edge $(u, v_{(1)})$:
	\begin{enumerate}
		\item The edge $(u, v_{(1)})$ exists in $E$. In this case, $v_{(1)}$ is a neighbor of $u$. From the initial condition, we have $d(q, u) > \frac{1}{\delta} \cdot d(q, v_{(1)}) > d(q, v_{(1)})$. Thus, $v_{(1)}$ is strictly closer to $q$ than $u$ is, and $(u, v_{(1)})$ is a monotonic step.
		
		\item The edge $(u, v_{(1)})$ does not exist in $E$. By theorem \ref{thm:construct_of_deltaemg}, its absence implies that it must be occluded by some other node $w$. This means there exists an edge $(u, w) \in E$ where $w \in \text{Occlusion}_{\delta}(u, v_{(1)})$. According to Lemma \ref{lemma:region_property}, we have $d(q, w) < d(q, u)$. Thus, $u$ has a neighbor $w$ that is strictly closer to $q$.
	\end{enumerate}
	
	In both cases, any node $u$ outside the $\delta$-neighborhood is guaranteed to have a neighbor in $G$ that is strictly closer to $q$. Therefore, any greedy search starting from $u$ must proceed along a monotonic path and cannot stop at $u$. Since the dataset $V$ is finite and the distance to $q$ strictly decreases at each step, this path must terminate. As it cannot terminate outside the $\delta$-neighborhood, it must eventually enter it. This satisfies the definition of a $\delta\text{-}\kwnospace{EMG}$.
\end{proof}

We analyze the expected out-degree of $\delta\text{-}\kwnospace{EMG}$, which is a key measure of the graph's structural complexity.

\begin{lemma}
	\label{lemma:deltaemg_deg}
	For a set $V$ of $n$ vectors sampled uniformly from a bounded region in $\mathbb{R}^d$, the expected out-degree of any vertex in the corresponding $\delta\text{-}\kwnospace{EMG}$ is $O(\ln n)$.
\end{lemma}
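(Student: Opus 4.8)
The plan is to bound the out-degree $\deg^{+}(u)$ of a fixed but arbitrary vertex $u$ with high probability over the sampling, and then convert this into the stated bound on the expectation: out-degrees are trivially at most $n-1$, so an exceptional event of probability $O(1/n)$ contributes only $O(1)$ to $\mathbb{E}[\deg^{+}(u)]$. The structural idea is to partition the other $n-1$ points into dyadic distance shells around $u$, $S_{t}(u)=\{w\in V: 2^{t-1}\rho_{0}\le \|u-w\| < 2^{t}\rho_{0}\}$, and to show (i) only $O(\ln n)$ of these shells are non-empty, and (ii) each non-empty shell contributes $O(1)$ out-neighbours of $u$; the product is $O(\ln n)$.

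For (i) I would use an elementary probabilistic-geometry estimate. Since $V$ lies in a bounded region, $\|u-w\|\le\mathrm{diam}=O(1)$ for all $w$. For the lower end, a union bound over the other $n-1$ points, together with the fact that a Euclidean ball of radius $t$ has measure $\Theta(t^{d})$ inside the sampling region, gives $\Pr[\min_{w\neq u}\|u-w\| < t]=O(n\,t^{d})$; choosing $t=\Theta(n^{-3/d})$ makes this $O(n^{-2})$, so after a union bound over all $n$ choices of $u$ the minimum pairwise distance is $\Omega(n^{-3/d})$ with probability $1-O(1/n)$. On this event the aspect ratio $\max_{w}\|u-w\|/\min_{w}\|u-w\|$ is $\mathrm{poly}(n)$, so only $O(\ln n)$ shells $S_{t}(u)$ are non-empty.

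For (ii) I would read the occlusion rule directly off the greedy realisation of Theorem~\ref{thm:construct_of_deltaemg} (process candidate targets in increasing distance from $u$, keep an edge iff no already-kept neighbour occludes it). Let $w,w'$ be out-neighbours of $u$ in one shell, say $\|u-w\|\le\|u-w'\|$. Since $w'$ survived, $w$ does not occlude $w'$, and because $\|u-w\|<\|u-w'\|$ the second clause of Definition~\ref{def:delta_occlusion_region} must fail: $d^{2}(w,w')+2\delta\|u-w'\|\,\|u-w\|\ge\|u-w'\|^{2}$. Expanding $d^{2}(w,w')$ by the law of cosines at $u$ and dividing through yields $\langle\widehat{w-u},\,\widehat{w'-u}\rangle\le\frac{\|u-w\|}{2\|u-w'\|}+\delta\le\frac12+\delta$. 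Hence the unit directions $\{\widehat{w-u}:w\in S_{t}(u)\cap\mathcal{N}(u)\}$ form a spherical code on $S^{d-1}$ with pairwise inner product at most $\frac12+\delta$; when $\delta<\frac12$ this is a fixed constant below $1$, so a packing bound caps their number by some $M(d,\delta)=O(1)$, and summing over the $O(\ln n)$ shells finishes this case.

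The main obstacle is the regime $\delta\ge\frac12$, where the angular bound $\frac12+\delta$ is vacuous (two out-neighbours at nearly equal distance from $u$ impose no angular constraint from the occlusion rule alone). Here I would lean on the uniform-sampling hypothesis rather than pure geometry: the occlusion region is a fixed teardrop scaled by $r=\|u-w\|$, so $\mathrm{vol}(\text{Occlusion}_{\delta}(u,w))=c_{d,\delta}\,r^{d}$ with $c_{d,\delta}>0$, and it lies inside the ball of radius $r$ about $u$, so every sample point it contains is processed before $w$. A union bound then shows that once $r$ exceeds the scale $\Theta((\ln n/n)^{1/d})$ this region contains a sample point of $V$ with probability $1-1/\mathrm{poly}(n)$; if one can further argue it then contains an \emph{out-neighbour} of $u$, the candidate $w$ is occluded, so all surviving out-neighbours lie within distance $O((\ln n/n)^{1/d})$ of $u$ --- a ball that contains only $O(\ln n)$ sample points w.h.p. --- again giving $\deg^{+}(u)=O(\ln n)$. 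This last step is the technical crux: unlike the MRNG lune, the teardrops $\text{Occlusion}_{\delta}(u,\cdot)$ are not nested, so the standard RNG-style ``empty region $\Leftrightarrow$ edge'' equivalence is unavailable and must be replaced by a more delicate argument that tracks which occluders actually become edges. Everything else --- the union bound over the $n$ choices of $u$, and the conversion from a high-probability degree bound to the bound on $\mathbb{E}[\deg^{+}(u)]$ --- is routine.
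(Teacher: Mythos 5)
Your approach is genuinely different from the paper's. You decompose the neighborhood of a fixed $u$ into $O(\ln n)$ dyadic distance shells (controlling the dynamic range via a w.h.p.\ lower bound on the minimum pairwise distance) and then cap the out-neighbors per shell by a spherical-code packing bound derived from the failed occlusion condition. The paper instead runs a mean-field argument on the probability $P(r)$ that an edge of length $r$ survives: it uses $w\in\text{Occlusion}_\delta(u,v)\Rightarrow d(u,w)<2(1-\delta)d(u,v)$ to set up a recursive inequality $P(r)\lesssim e^{-K r^d P(cr)}$ under a Poisson approximation, concludes $P(r)=O(r^{-d})$, and integrates $\rho\,P(r)\,S_{d-1}r^{d-1}$ from $r_{\min}=O(1)$ to $r_{\max}=O(n^{1/d})$ to get $O(\ln n)$. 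Both arguments are ``shell-averaged'' in spirit, but yours trades the paper's independence/Poisson heuristics for a combinatorial packing bound.

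The packing argument, however, has the genuine gap you yourself flag. The bound you derive from the surviving-edge condition, $\langle\widehat{w-u},\widehat{w'-u}\rangle\le \tfrac{d(u,w)}{2d(u,w')}+\delta$, has supremum $\tfrac12+\delta$ over any shell (attained as $d(u,w)\to d(u,w')$, and no choice of shell ratio improves the worst case), so for $\delta\ge\tfrac12$ the angular constraint is vacuous and the spherical-code bound gives nothing. Since the lemma is stated for all $\delta\in(0,1)$, this is not a corner case. Your proposed fallback (large-$r$ occlusion teardrops contain a sample point w.h.p.) then runs into exactly the difficulty you identify: an occluding sample point must itself have been \emph{kept} as an out-neighbor of $u$ to reject $v$, and the teardrops are not nested, so the RNG-style ``nonempty occlusion region $\Rightarrow$ no edge'' implication does not hold. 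The paper sidesteps this by folding the ``occluder must be an edge'' condition into the recursion $P(r)=\mathbb{E}\bigl[\prod_{w}(1-P(d(u,w)))\bigr]$ — at the price that this recursion is itself a heuristic (it assumes independence of edge events across $w$ and invokes a Poisson approximation). So: your proof is complete and arguably cleaner for $\delta<\tfrac12$, but incomplete for $\delta\ge\tfrac12$; to match the paper's scope you would either need to close the nested-occluder crux or adopt a mean-field estimate similar to the paper's.
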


\begin{proof}
We assume that the dataset $V$ consists of $n$ points uniformly distributed in a bounded region of $\mathbb{R}^d$, with a constant point density $\rho$.
Let $P(r)$ denote the probability that an edge $(u,v)$ exists between two points $u,v \in V$ at distance $r = d(u,v)$.
Since the distribution is uniform and isotropic, $P(r)$ depends only on the distance $r$.

By Theorem~\ref{thm:construct_of_deltaemg}, an edge $(u,v)$ exists only if no other vertex $w$ creates an edge $(u,w)$ that occludes it, i.e.\ $w \in \text{Occlusion}_{\delta}(u,v)$.
For convenience, denote the occlusion region by
\[
\Omega_{\delta}(u,v) = \text{Occlusion}_{\delta}(u,v)
\]
The probability that $(u,v)$ survives occlusion can be written as
\[
P(r) = \mathbb{E}\!\left[\prod_{w \in V \cap \Omega_{\delta}(u,v)} (1 - P(d(u,w)))\right]
\]
Since the occlusion region expands as $r$ increases, it becomes increasingly likely to contain occluding vertices. Therefore, $P(r)$ is a monotonically decreasing function of $r$.

For any $w \in \Omega_{\delta}(u,v)$, by the definition of the occlusion region we have
\[
d^2(w,v) + 2\delta\, d(u,v)\, d(w,u) < d^2(u,v).
\]
Combining this with the triangle inequality $d(w,v) \ge |d(u,v) - d(w,u)|$,
and noting that $d(w,u) > 0$ since there are no duplicate points, we obtain
\[
d(u,w) < 2(1-\delta)\, d(u,v)
\]
Let $c = 2(1-\delta)$. Since $P(\cdot)$ is decreasing, for any such $w \in \Omega_{\delta}(u,v)$,
\[
P(d(u,w)) > P(c\, d(u,v)) = P(cr)
\]
Hence,
\[
P(r) < \mathbb{E}\!\left[\prod_{w \in V \cap \Omega_{\delta}(u,v)} (1 - P(cr))\right]
\]

Let the volume of the occlusion region be $\operatorname{Vol}(\Omega_{\delta}(u,v)) = C_{\delta,d}\, r^d$ for some constant $C_{\delta,d} > 0$ depending only on $\delta$ and $d$.
Under a Poisson point process approximation, the number of vertices inside the occlusion region follows a Poisson distribution with rate parameter
\[
\lambda = \rho \cdot \operatorname{Vol}(\Omega_{\delta}(u,v)) \cdot P(cr)
= \rho\, C_{\delta,d}\, r^d\, P(cr)
\]
The probability that this region is empty of occluding vertices is approximately $e^{-\lambda}$, giving
\[
P(r) < e^{-K\, r^d P(cr)},
\]
where $K = \rho\, C_{\delta,d} > 0$

We now show that the recursive inequality above implies that $P(r)$ decays at least as fast as $r^{-d}$.
Suppose, for contradiction, that there exist constants $\epsilon, \beta > 0$ such that
\[
P(r) > \beta\, r^{\epsilon - d}
\]
for all sufficiently large $r$
Substituting this into the inequality gives
\[
\beta\, r^{\epsilon - d}
< P(r)
< e^{-K\, r^d P(cr)}
< e^{-K\, r^d \beta (cr)^{\epsilon - d}}
= e^{-K'\, r^{\epsilon}},
\]
where $K' = K\, \beta\, c^{\epsilon - d} > 0$.
However, as $r \to \infty$, the polynomial term $\beta r^{\epsilon - d}$ decays much more slowly than the super-exponential term $e^{-K' r^{\epsilon}}$, which yields a contradiction.
Therefore, $P(r)$ must decay at least as fast as $O(r^{-d})$.

The expected out-degree of a vertex $u$ is
\[
\mathbb{E}[\deg^+(u)] \approx \int \rho\, P(r)\, dV
= \int_{r_{\min}}^{r_{\max}} \rho\, P(r)\, S_{d-1}\, r^{d-1}\, dr,
\]
where $S_{d-1}$ denotes the surface area of the unit $(d-1)$-sphere.
Using $P(r) = O(r^{-d})$, we obtain
\[
\mathbb{E}[\deg^+(u)] \le
O\!\left(\int_{r_{\min}}^{r_{\max}} r^{-d}\, r^{d-1}\, dr\right)
= O\!\left(\ln r \Big|_{r_{\min}}^{r_{\max}}\right)
\]

For a dataset of $n$ uniformly distributed points with fixed density $\rho$, the dataset volume scales as $O(n)$, so the maximum inter-point distance satisfies $r_{\max} = O(n^{1/d})$, while the minimum distance $r_{\min} = O(1)$.
Substituting these bounds yields
\[
\mathbb{E}[\deg^+(u)] = O(\ln r_{\max} - \ln r_{\min})
= O(\ln n).
\]
Thus, the expected out-degree of any vertex in a $\delta\text{-}\kwnospace{EMG}$ constructed from uniformly sampled points in $\mathbb{R}^d$ is $O(\ln n)$.
\end{proof}

Algorithm~\ref{alg:construct_delta_emg} provides a direct, exact implementation of the construction principle from Theorem~\ref{thm:construct_of_deltaemg}. For each node $u$, it considers all other nodes as potential neighbors, sorted by distance. It then iteratively adds an edge $(u,v)$ only if $v$ is not occluded by any of the shorter, already-accepted neighbors of $u$.

\begin{algorithm}[t!]
	\caption{Construct $\delta\text{-}\kwnospace{EMG}$}
	\label{alg:construct_delta_emg}
	\begin{algorithmic}[1]
		\Require Vector set $V$, parameter $\delta \in (0,1)$
		\Ensure A $\delta\text{-}\kwnospace{EMG}$ $G=(V,E)$
		
		\State Initialize edge set $E \gets \varnothing$
		\ForAll{node $u \in V$}
		\State $\mathcal{N}(u) \gets \Call{SelectNeighbors}{u, V, \delta}$ 
		\State $E \gets E \cup \{(u,v)\mid v\in\mathcal{N}(u)\}$
		\EndFor
		\State \Return $G = (V, E)$
		
		\Statex
		\Function{SelectNeighbors}{$u, V, \delta$} 
		\State $\mathcal{N}(u) \gets \varnothing$
		\ForAll{$v \in V$ sorted by increasing distance from $u$}
		\If{$\nexists w \in \mathcal{N}(u) \text{ and } w \in \text{Occlusion}_{\delta}(u,v)$}
		\State $\mathcal{N}(u) \gets \mathcal{N}(u) \cup \{v\}$
		\EndIf
		\EndFor
		\State \Return $\mathcal{N}(u)$
		\EndFunction
	\end{algorithmic}
\end{algorithm}

\stitle{Complexity of Algorithm~\ref{alg:construct_delta_emg}.} Recall that Lemma~\ref{lemma:deltaemg_deg} guarantees that $\delta\text{-}\kwnospace{EMG}$ has an expected out-degree of $O(\ln n)$.
Algorithm~\ref{alg:construct_delta_emg} requires space of $O(n \ln n)$. 
Algorithm~\ref{alg:construct_delta_emg} requires examining all pairs of points to test occlusion relationships. Consequently, its overall time complexity remains $O(n^2 \ln n)$.

\subsection{ANN Search on $\delta\text{-}\kwnospace{EMG}$}
\label{sec:anns_emg}

While the $\delta\text{-}\kwnospace{EMG}$ provides a foundational top-1 ANN guarantee, practical applications demand both the retrieval of $k$ nearest neighbors and the ability to specify a more stringent error bound at query time. The standard approach is to employ greedy search with a heuristically chosen candidate set size $l > k$, offering no formal connection between $l$ and the resulting error.

To address this, we introduce an error-bounded $k$-ANN search algorithm that operates on a $\delta\text{-}\kwnospace{EMG}$, presented as Algorithm \ref{alg:eb_ann_search}. 
Rather than using a fixed candidate set size, our algorithm adaptively expands the candidate set size $l$, starting from $l=k$. The search terminates when the distance to the $k$-th candidate is sufficiently close to the query relative to the current search frontier. This termination is governed by a user-specified parameter $\alpha \ge 1$.

\begin{algorithm}[t!]
	\caption{Error-Bounded top-$k$ ANN Search on $\delta\text{-}\kwnospace{EMG}$}
	\label{alg:eb_ann_search}
	\begin{algorithmic}[1]
		\Require $\delta\text{-}\kwnospace{EMG}$ Graph $G=(V,E)$, query $q$, start node $v_s \in V$, result size $k$, accuracy parameter $\alpha \ge 1$
		\Ensure $R_k(q)$: $k$ approximate nearest neighbors of $q$
		
		\State candidate set $C \gets \{v_s\}$, visited set $T \gets \varnothing$
		\For{candidate set size $l = k, k+1, k+2, \dots$}
		\While{$\exists u \in C[1{:}l], u \notin T$}
		\State $u \gets \arg\min_{u \in C[1{:}l] \setminus T}d(q,u)$
		\State $T \gets T \cup {u}$
		\ForAll{$v \in \mathcal{N}(u) \setminus T$}
		\State $C \gets C \cup \{v\}$
		\EndFor
		\State keep top $l+1$ candidates in $C$ in ascending distance
		\EndWhile
		\State \textbf{if} $d(q,C[l]) \ge \alpha \cdot d(q,C[k]) $ \textbf{then break}
		\EndFor
		
		\State \Return $C[1{:}k]$
	\end{algorithmic}
\end{algorithm}

The formal guarantee of this algorithm depends on the discovery of a local minimum during the search (defined as a node with no neighbors closer to $q$). Let $R_k(q) = (r_{(1)}, \dots, r_{(k)})$ be the result returned by the algorithm, and let $N_k(q) = (v_{(1)}, \dots, v_{(k)})$ be the true $k$ nearest neighbors, both ordered by distance to $q$.

\begin{theorem}
	\label{thm:kann_guarantee}
	Let $C$ be the final candidate set upon termination of Algorithm~\ref{alg:eb_ann_search}, and let $R_k(q)$ be the returned result. If $\exists u \in C \setminus R_k(q) $ and $u$ is a local optimum node, then $R_k(q)$ satisfies the following error bound for all $i \in \{1, \dots, k\}$:
	$$ d(q,r_{(i)}) \leq \frac{1}{\delta'} \cdot d(q,v_{(i)}) \quad \text{where} \quad \delta' = \delta \cdot \frac{d(q,u)}{d(q,r_{(k)})} $$
\end{theorem}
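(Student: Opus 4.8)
The plan is to reduce this rank-aware $k$-ANN guarantee to the single fact already established for the $1$-ANN case (Theorem~\ref{prop:delta-msnet-guarantee}), so that one short chain of inequalities covers all ranks $i$ simultaneously. The argument rests on three elementary observations: (i) the hypothesized local optimum node $u$ must lie inside the $\delta$-neighborhood of $q$; (ii) since $u\in C\setminus R_k(q)$ and $C$ is maintained sorted by distance to $q$, we have $d(q,u)\ge d(q,r_{(k)})$; and (iii) both the returned list $R_k(q)$ and the ground-truth list $N_k(q)$ are sorted by distance, so $d(q,r_{(i)})\le d(q,r_{(k)})$ and $d(q,v_{(1)})\le d(q,v_{(i)})$ for every $i\le k$.

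First I would localize $u$, arguing exactly as in the proof of Theorem~\ref{prop:delta-msnet-guarantee}. By Definition~\ref{def:mono_search}, $u$ being a local optimum node with respect to $q$ means no neighbor of $u$ is strictly closer to $q$. If $u$ lay outside the $\delta$-neighborhood of $q$, then the defining property of a $\delta\text{-}\kwnospace{EMG}$ (Definition~\ref{def:delta_emg}) would supply a monotonic path starting at $u$ and ending inside that neighborhood; its first edge $(u,w)$ would satisfy $d(q,w)<d(q,u)$, contradicting the local-optimum property. Hence $u$ lies in the $\delta$-neighborhood, i.e. $\delta\cdot d(q,u)\le d(q,v_{(1)})$.

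Next I would assemble the bound. Because $C$ is kept in nondecreasing distance to $q$ and $R_k(q)=C[1{:}k]$, the hypothesis $u\in C\setminus R_k(q)$ forces $d(q,u)\ge d(q,r_{(k)})$, so $\delta'=\delta\,d(q,u)/d(q,r_{(k)})$ lies in $[\delta,1]$; in particular the claimed bound is never weaker than the baseline $1/\delta$ guarantee, which is exactly why the hypothesis excludes $u\in R_k(q)$. Then for any $i\in\{1,\dots,k\}$, combining $d(q,r_{(i)})\le d(q,r_{(k)})$, the identity $\delta'\,d(q,r_{(k)})=\delta\,d(q,u)$, the inequality $\delta\,d(q,u)\le d(q,v_{(1)})$ established above, and $d(q,v_{(1)})\le d(q,v_{(i)})$ gives
\begin{multline*}
\delta'\,d(q,r_{(i)}) \;\le\; \delta'\,d(q,r_{(k)}) \;=\; \delta\cdot\frac{d(q,u)}{d(q,r_{(k)})}\cdot d(q,r_{(k)}) \\
\;=\; \delta\,d(q,u) \;\le\; d(q,v_{(1)}) \;\le\; d(q,v_{(i)}),
\end{multline*}
which rearranges to $d(q,r_{(i)})\le (1/\delta')\,d(q,v_{(i)})$, as required.

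The only step carrying real content is the localization of $u$, and that is inherited essentially verbatim from the $\delta\text{-}\kwnospace{EMG}$ definition; the remainder is bookkeeping with two sorted sequences, so I anticipate no genuine obstacle. The one point needing a word of care is the degenerate case $d(q,r_{(k)})=0$, which makes $\delta'$ undefined: it would require $k$ exact copies of $q$ inside $V$, can be excluded by a general-position assumption, and in any event trivializes the statement, since then $d(q,v_{(i)})=0$ for all $i\le k$.
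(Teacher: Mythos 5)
Your proof is correct and follows essentially the same route as the paper's: localize the local optimum $u$ inside the $\delta$-neighborhood to get $\delta\,d(q,u)\le d(q,v_{(1)})$, rewrite this as $\delta'\,d(q,r_{(k)})\le d(q,v_{(1)})$ via the definition of $\delta'$, and close with the monotonicity of the two sorted lists. Your extra remarks (that $\delta'\in[\delta,1]$, and the degenerate case $d(q,r_{(k)})=0$) are sound additions but do not change the argument.
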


\begin{proof}
	Since $u$ is a local optimum node, it has no neighbor closer to $q$. By the property of the $\delta\text{-}\kwnospace{EMG}$ in Theorem \ref{prop:delta-msnet-guarantee}, it must satisfy $\delta \cdot d(q,u) \leq d(q,v_{(1)}) $. From the definition of $\delta'$ in the theorem statement, we have $\delta \cdot d(q,u) = \delta' \cdot d(q,r_{(k)})$.
	Substituting this into the previous inequality gives:
	$$ \delta' \cdot d(q,r_{(k)}) \leq d(q,v_{(1)}) $$
	We know that for any $i \in \{1, \dots, k\}$, $d(q,r_{(i)}) \leq d(q,r_{(k)})$ and $d(q,v_{(1)}) \leq d(q,v_{(i)})$. Combining these inequalities yields the full chain:
	$$ \delta' \cdot d(q,r_{(i)}) \leq \delta' \cdot d(q,r_{(k)}) \leq d(q,v_{(1)}) \leq d(q,v_{(i)}) $$
	This proves the theorem.
\end{proof}

The effective error bound \(\delta'\), as derived from Theorem \ref{thm:kann_guarantee}, is determined after the search process. Its value depends on whether there exists a local optimum \(u\) in the final candidate set \(C\). This condition is reasonable because the sparsity of high-dimensional spaces makes local optima statistically likely to exist. 

Note that, the parameter \(\alpha\) provided by the user directly influences the search accuracy. A larger \(\alpha\) enforces a stricter stopping criterion, which requires the algorithm to expand its candidate set size. This wider search increases the chances of finding a local optimum \(u\) farther from the query, leading to a stronger approximation guarantee. Thus, the quality of the approximation is closely tied to the choice of \(\alpha\). 

Exp-6 and 7 in Sec. \ref{sec:effect} demonstrate how \(\alpha\) impacts the achieved error bound \(\delta'\). The results show that if \(\alpha > 2\), we can find such a local optimum node with a probability of over 95\%, and in this case, the algorithm is theoretically guaranteed to be error-bounded.

\section{Algorithms for Quantized $\delta\text{-}\kwnospace{EMG}$ ($\delta\text{-}\kwnospace{EMQG}$)}

While the exact $\delta\text{-}\kwnospace{EMG}$ provides a robust theoretical foundation, its direct construction is computationally intractable and likely to cause a highly non-uniform degree distribution. Nodes in dense regions accumulate excessive edges, leading to high search overhead, which is a common challenge for many proximity graph methods.

To overcome these limitations, we introduce a approximate construction of $\delta\text{-}\kwnospace{EMG}$. The core insight is to differentiate the roles of edges based on their length: long-range edges for coarse navigation can have relaxed guarantees, while short-range edges for fine-grained convergence must be robust. This motivates a principled relaxation where the parameter $\delta$ becomes an adaptive function of the edge length for any $u,v \in V$:
$$ \delta_t(u, v) = 1 - \frac{d(u,v)}{d(u,v_{(t)})} $$
where $v_{(t)}$ is the $t$-th nearest neighbor of $u$ in $V$. 

This adaptive definition creates a multi-scale graph structure. For long-range edges where $d(u,v) > d(u,v_{(t)})$, $\delta(u,v)$ becomes negative, the deterministic guarantee is relaxed in favor of probabilistic progress. As the search enters a query's neighborhood, the value of $\delta(u,v)$ increases towards 1. This approximation thus locally recovers the strong navigable properties of a $\delta\text{-}\kwnospace{EMG}$ with high-$\delta$.

Practically, it provides two major benefits:
\begin{enumerate}
	\item Locality. Long-range edges are pruned early, so each node only explores a small local subset during construction, reducing complexity to near-linear.
	\item Degree Balancing. Dense regions prune more aggressively, preventing degree explosion; sparse regions retain enough edges for connectivity.
\end{enumerate}

\begin{algorithm}[t!]
	\caption{Approximate Construction of $\delta\text{-}\kwnospace{EMG}$}
	\label{alg:approx_construct_deltaemg}
	\begin{algorithmic}[1]
		\Require Vector set $V$, max out-degree $M$, candidate set size $L$, neighborhood-scale parameter $t \le L$, iterations $I$
		\Ensure  An Approximate $\delta\text{-}\kwnospace{EMG}$ $G=(V,E)$
		
		\State Let $v_s$ be the approximate medoid of $V$.
		\State Initialize $G=(V, E)$ from a top-M approximate NN graph.
		\For{iter from 1 to $I$}
		\State New edge set $E_{new} \gets \varnothing$
		\ForAll{$u \in V$}
		\State $R_u \gets \Call{GreedySearch}{G, v_s, u, L, L}$
		\State $\mathcal{N}(u) \gets \Call{LocallySelectNeighbors}{u, R_u, t}$
		\If{$|\mathcal{N}(u)| > M$} 
		\State $\mathcal{N}(u) \gets$ the $M$ closest nodes in $\mathcal{N}(u)$
		\EndIf
		\State $E_{new} \gets E_{new} \cup \{(u,v)\mid v\in\mathcal{N}(u)\}$
		\EndFor
		\State $G \gets (V, E_{new})$
		\State Add reverse edges to $G$ within the degree $M$.
		\State Connect any nodes unreachable from $v_s$ to their nearest reachable neighbors, subject to the degree limit $M$.
		\EndFor
		
		\Statex
		\Function{LocallySelectNeighbors}{$u, R_u, t$} 
		\State $\mathcal{N}(u) \gets \varnothing$
		\State Let $r_{(t)}$ be the $t$-th closest node to $u$ in $R_u$
		\ForAll{$r \in R_u$ sorted by increasing distance from $u$}
		\State $\delta \gets 1 - \frac{d(u,r)}{d(u,r_{(t)})}$
		\If{$\nexists w \in \mathcal{N}(u) \text{ and } w \in \text{Occlusion}_{\delta}(u,r)$}
		\State $\mathcal{N}(u) \gets \mathcal{N}(u) \cup \{r\}$
		\EndIf
		\EndFor
		\State \Return $\mathcal{N}(u)$
		\EndFunction
		
		\State \Return $G = (V, E)$
	\end{algorithmic}
\end{algorithm}

The practical approximate construction of a $\delta\text{-}\kwnospace{EMG}$, detailed in Algorithm \ref{alg:approx_construct_deltaemg}, is an iterative process that refines an initial bootstrap graph (e.g., an approximate k-NN graph). Each iteration rebuilds the graph by first using beam search to generate local candidates for every node, and then applying our adaptive occlusion rule to prune these candidates into the final neighbor sets. This process typically converges to a high-quality graph within 3-4 iterations.

\stitle{Complexity of Algorithm~\ref{alg:approx_construct_deltaemg}.} The dominant computational cost of Algorithm~\ref{alg:approx_construct_deltaemg} lies in lines~6 and~7. Line~6 performs a greedy search on the current graph to obtain a local candidate set. 
According to the analysis in~\cite{nsg}, the complexity of such a search is approximately 
$O(L\, n^{1/d} \ln(n^{1/d}) / \Delta)$, 
where $\Delta$ is the smallest distance between any two distinct points in $V$. Line~7 processes at most $L$ candidates and checks occlusion against at most $M$ accepted neighbors, 
yielding a per-node complexity of $O(LM)$. 
Since both $M$ and $L$ are small constants relative to $n$, 
the overall time complexity per iteration is
$O(L n^{(d + 1)/d} \ln(n^{1/d}) / \Delta)$. 
Considering space complexity, since the out-degree of each node is hard-capped at a constant $M$, the space complexity of Algorithm~\ref{alg:approx_construct_deltaemg} is \({O}(Mn)={O}(n) \).

\subsection{Construction of $\delta\text{-}\kwnospace{EMQG}$}

Building upon the proposed approximate construction framework, we further incorporate vector quantization to reduce the cost of distance computations during ANN search. 
Specifically, we adopt the RaBitQ quantization scheme~\cite{rabitq}, as it provides an unbiased distance estimator with a rigorous theoretical error bound. 
Moreover, RaBitQ achieves an exceptionally high compression ratio, reducing vectors to as little as one bit per dimension. While prior work~\cite{symqg} has successfully integrated RaBitQ into a graph-based index, their approach relies on a heuristic NSG-like structure, lacking the geometric guarantees offered by our $\delta\text{-}\kwnospace{EMG}$ formulation.

The estimation of approximate distances from RaBitQ codes is performed using FastScan \cite{fastscan,quickadc} that leverages SIMD instructions. FastScan processes vectors in fixed-size batches, typically a multiple of the SIMD width (e.g., 32). Consequently, if the number of a node's neighbors is not a multiple of this batch size, the final batch incurs wasted computational cycles.

To ensure perfect alignment with FastScan's operational model, we first set the maximum out-degree $M$ to a multiple of the SIMD batch size. Recall that the parameter $t$ in Algorithm~\ref{alg:approx_construct_deltaemg} monotonically influences the resulting neighborhood size: a larger $t$ expands the local distance scale $d(u, v_{(t)})$, thereby relaxing the pruning condition and yielding more neighbors. During construction, for any node whose initial neighbor set (pruned with a default $t$) is smaller than $M$, we perform a binary search on the parameter $t$ within the candidate range $[1, L]$ to find the smallest value that produces a neighborhood of exactly size $M$.

The overall construction of $\delta\text{-}\kwnospace{EMQG}$ follows the iterative framework outlined in Algorithm \ref{alg:approx_construct_deltaemg}. Specifically, the neighborhood alignment step takes place after global connectivity has been established. Subsequently, for each node, we compute and store the RaBitQ codes for its entire neighborhood.

\subsection{Quantized Search on $\delta\text{-}\kwnospace{EMQG}$}

\begin{algorithm}[t!]
	\caption{Probing top-$k$ ANN Search on $\delta\text{-}\kwnospace{EMQG}$}
	\label{alg:probe_search}
	\begin{algorithmic}[1]
		\Require $\delta\text{-}\kwnospace{EMQG}$ $G=(V,E)$, query vector $q$, start node $v_s \in V$, result size $k$, accuracy parameter $\alpha \ge 1$
		\Ensure $R_k(q)$: $k$ approximate nearest neighbors of $q$
		
		\State candidate set $C_e \gets \{v_s\}$, $C_a \gets \varnothing$, visited set $T_e \gets \varnothing$, $T_a \gets \varnothing$
		\State $d_{last} \gets d(q,v_s)$
		
		\For{candidate set size $l = k, k+1, k+2, \dots$}
		\While{true}
		\State $u \gets \arg\min_{u \in C_e[1{:}l] \setminus T_e}d(q,u)$
		\State $w \gets \arg\min_{w \in C_a[1{:}l] \setminus T_a}\tilde{d}(q,u)$
		\If{$u$ is null and $w$ is null} \textbf{break}
		\ElsIf{$\Call{NeedProbing}{u, w, d_{last}}$}
		\State Compute exact distance $d(q,w)$
		\State $C_e \gets C_e \cup \{w\}, T_a \gets T_a \cup \{w\}$
		\State maintain $C_e$ in ascending exact distance
		\Else
		\State $d_{last} \gets d(q,u)$
		\State Compute approx. distances $\tilde{d}(q,v)$ for all $v \in \mathcal{N}(u)$
		\State $C_a \gets C_a \cup \mathcal{N}(u) \setminus T_a, T_e \gets T_e \cup \{u\}$
		\State maintain $C_a$ in ascending approximate distance
		\EndIf
		\EndWhile
		\State \textbf{if} $d(q,C_e[l]) \ge \alpha \cdot d(q,C_e[k])$ \textbf{then break}
		\EndFor
		
		\State \Return $C_e[1{:}k]$
		
		\Statex
		\Function{NeedProbing}{$u, w, d_{last}$}
		\If{$u$ is null} \Return true
		\ElsIf{$d(q, u) > d_{last}$ \textbf{and} $w$ is not null \textbf{and} $\tilde{d}(q,w) < d(q, u)$} 
		\State \Return true
		\Else
		\State \Return false
		\EndIf
		\EndFunction
	\end{algorithmic}
\end{algorithm}

Searching on a quantized graph introduces a critical trade-off: relying solely on approximate distances for navigation compromises accuracy, whereas frequent recourse to exact distance computations negates the performance benefits of quantization. To resolve this, we introduce the \textbf{Probing Search} algorithm (Algorithm \ref{alg:probe_search}). This method seamlessly integrates the speed of quantized exploration with the accuracy of exact verification, while maintaining the adaptive structure of Algorithm \ref{alg:eb_ann_search} with a dynamic increment candidate set size.

The algorithm maintains two candidate sets:
\begin{enumerate}[leftmargin=*]
	\item Exact Candidate Set ($C_e$): It stores candidates whose exact distances $d(q, \cdot)$ have been computed. It serves as the basis for the final result and the termination condition.
	\item Approximate Candidate Set ($C_a$): It is a set of candidates discovered during graph traversal, ordered by their approximate distances $\tilde{d}(q, \cdot)$.
\end{enumerate}

The search dynamically alternates between two operations: \textbf{Expansion} and \textbf{Probing}. The decision logic, which depends on the best unvisited candidates from both sets ($u \in C_e[1{:}l]$ and $w \in C_a[1{:}l]$) and the distance of the last expanded node ($d_{last}$), is designed to minimize exact distance computations by invoking them only when the search encounters a potential local optimum.
\begin{enumerate}[leftmargin=*]
	\item Expansion: This is the default operation. The algorithm expands from the most promising exact candidate $u$ by retrieving its neighbors, computing their approximate distances in a batch via FastScan, and inserting them into $C_a$.
	\item Probing: This is the operation when the search using exact distances stops improving—where \(u\) is farther from the query than \(d_{last}\)—and the candidate \(w\) from the approximate set looks better. The algorithm then "probes" this candidate $w$ by computing its exact distance and promoting it to $C_e$.
\end{enumerate}

\section{Experiments}
In this section, we conduct extensive experiments to evaluate the proposed algorithms. We implement seven different algorithms for comparison:

\noindent\textit{(i)} NSG\cite{nsg}. We use an efficient implementation of NSG provided in the open-source \textit{Glass} library.

\noindent\textit{(ii)} HNSW\cite{hnsw}. We also use the implementation from the \textit{Glass} library for consistency.

\noindent\textit{(iii)} $\tau$-MNG\cite{taumg}. Since no official implementation is available, we reimplemented $\tau$-MG based on the NSG source code.

\noindent\textit{(iv)} NGT-QG\cite{onng}. NGT-QG is a quantized graph method developed in the open-source NGT library. It integrates Product Quantization with a proximity graph.

\noindent\textit{(v)} SymphonyQG\cite{symqg}. SymphonyQG represents the latest advancement in quantized graph. We use the official open-source implementation released by the authors.

\noindent\textit{(vi)} $\delta\text{-}\kwnospace{EMG}$. 
$\delta\text{-}\kwnospace{EMG}$ is our proposed error-bounded monotonic graph framework. The index is constructed using Algorithm~\ref{alg:approx_construct_deltaemg}, and queries are performed using the error-bounded k-ANN search (Algorithm~\ref{alg:eb_ann_search}).

\noindent\textit{(vii)} $\delta\text{-}\kwnospace{EMQG}$. $\delta\text{-}\kwnospace{EMQG}$ extends $\delta\text{-}\kwnospace{EMG}$ by integrating RaBitQ-based vector quantization. It employs the Probing Search (Algorithm~\ref{alg:probe_search}) for queries.

\begin{table}[t]\vspace{-3mm}
	\small
	\centering
	\caption{Statistics of datasets}
	\vspace{-3mm}
	\label{tab:datasets}
	\begin{tabular}{c|c|c|c|c}
		\hline
		Dataset & dimension & \# of base & \# of queries & LID \\ \hline
		SIFT1M		&128 	&1,000,000		&10,000 	&9.3\\
		SIFT50M	&128 	&50,000,000		&10,000 	&9.3\\
		GIST		&960 	&1,000,000		&1,000 	&18.9\\
		MSong		&420 	&992,272		&200 	&9.5\\
		Crawl		&300 	&1,989,995		&10,000 	&15.7\\
		Deep1M		&256 	&1,000,000		&1,000 	&12.1\\
		\hline
	\end{tabular}
	\vspace{-3mm}
\end{table}

\stitle{Datasets.} Our evaluation is conducted on six widely-used real-world datasets that span diverse modalities, dimensionalities, and data scales: {SIFT1M}, {SIFT50M}, {GIST}, {MSong}, {Crawl}, and {Deep1M}.
These datasets are standard benchmarks in ANN research and have been adopted in prior studies~\cite{nsg,hnsw,fanng,symqg,annbenchmark,survey1} to evaluate both accuracy and efficiency. 
Specifically, SIFT1M and SIFT50M consist of 128-dimensional image descriptors; GIST contains holistic image features; MSong includes audio features; Crawl comprises word embeddings; and Deep1M contains deep visual features.
Table~\ref{tab:datasets} summarizes their key statistics, including the \textit{Local Intrinsic Dimensionality} (LID), which measures the local geometric complexity of each dataset. Datasets with higher LID values are generally harder for ANN search.

\stitle{Parameter settings.} 
For the baseline methods, construction parameters were set according to the recommendations in prior studies\cite{hnsw,nsg,taumg,symqg}. For our proposed $\delta\text{-}\kwnospace{EMG}$ and $\delta\text{-}\kwnospace{EMQG}$, we fixed the candidate set size at $L=1000$ and the number of refinement iterations at 3 across all datasets. The maximum out-degree $M$ was generally set to 64. $M$ was reduced to 32 for $\delta\text{-}\kwnospace{EMG}$ on MSong and Crawl, and for $\delta\text{-}\kwnospace{EMQG}$ on SIFT50M, as this value is sufficient for achieving high performance on these datasets.

At query time, we varied the primary search-time parameters for each algorithm to obtain its performance profile across different accuracy levels. For NSG, HNSW, $\tau$-MG, and SymphonyQG, this corresponds to adjusting the search candidate set size. The search procedure for NGT-QG involves two key parameters, the search radius and reranking size; we varied combinations of these two parameters to generate its accuracy-efficiency curve. For our $\delta\text{-}\kwnospace{EMG}$ and $\delta\text{-}\kwnospace{EMQG}$, we varied the accuracy parameter $\alpha$.

\stitle{Experimental settings.} 
All experiments were conducted on a server equipped with an AMD Ryzen Threadripper 3990X 64-Core Processor and 320 GB of RAM, running a Linux 4.4 kernel. The processor supports the AVX2 instruction set, which is leveraged by methods employing SIMD acceleration. All algorithms were implemented in C++ and compiled with GCC 13.3 using the `-Ofast' optimization flag. Index construction for all methods was parallelized using 32 threads. All search performance was measured on a single thread. Each reported Queries-Per-Second (QPS) value is the average of five runs.

\subsection{Performance V.S. the Baselines}
\stitle{Exp-1. QPS Comparison with the Baselines.}
We first compare the Queries-Per-Second (QPS) against recall for all methods across the six datasets. 
For each method, its build-time parameters were fixed according to the parameter settings described earlier, while its search-time parameters were varied to generate the QPS–Recall curves.
We tested with $k=1, 10, 100$ on all datasets, targeting a recall range of 0.9 to 0.995. 
Figure~\ref{fig:qps-recall-part} presents the comprehensive results. The complete results for all datasets are provided in the appendix (Figure~A.1). Both axes follow the logarithmic scaling used in ANN-Benchmarks \cite{annbenchmark}, with the x-axis logarithmic in $1-\text{recall}$ to emphasize the high-recall region. Curves that are higher and to the right are better. We note that NGT-QG results are missing for SIFT50M because it ran out of our memory.

\begin{figure}[t]
	\centering
	% 图例
	\includegraphics[width=0.8\columnwidth]{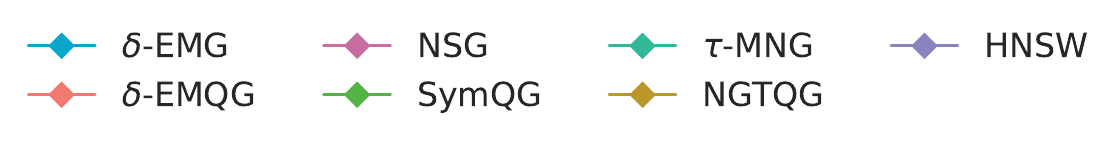}%
	\vspace{0.3em} % 可调节图例与子图之间的垂直间距
	
	\captionsetup[subfigure]{justification=centering} % 可选：子图居中
	\begin{subfigure}[b]{0.50\columnwidth}
		\centering
		\includegraphics[width=\linewidth]{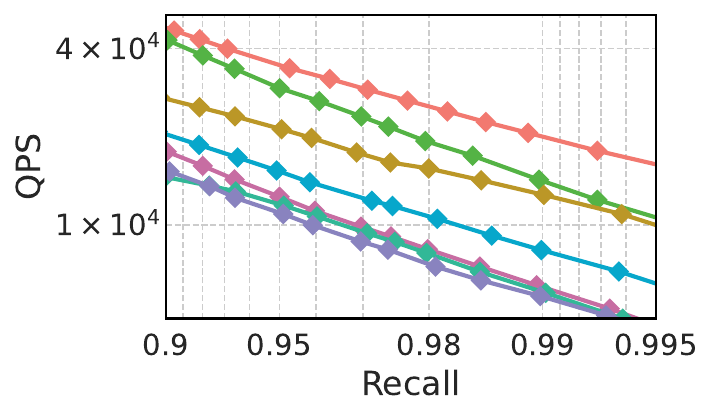}
		\caption{SIFT1M, k=10}
	\end{subfigure}
	\hspace{-0.03\columnwidth}
	\begin{subfigure}[b]{0.50\columnwidth}
		\centering
		\includegraphics[width=\linewidth]{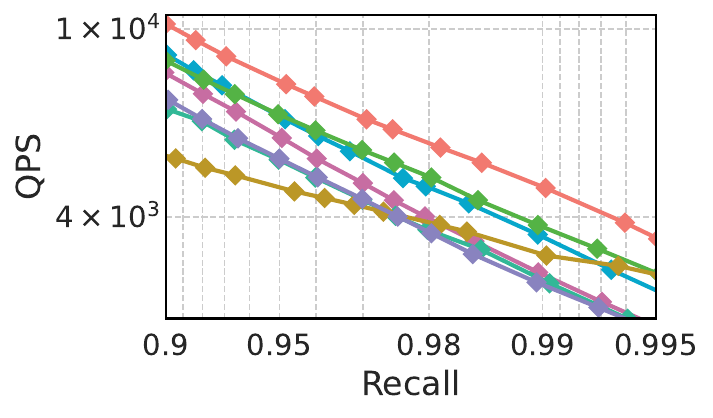}
		\caption{SIFT1M, k=100}
	\end{subfigure}
	%\vspace{-3mm}
	
	% 第二行
	\begin{subfigure}[b]{0.50\columnwidth}
		\centering
		\includegraphics[width=\linewidth]{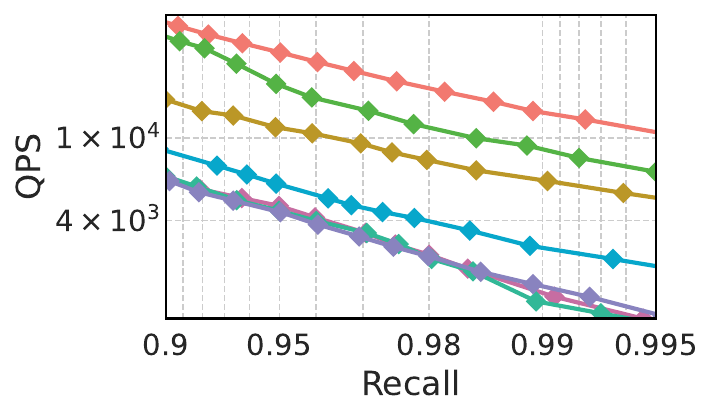}
		\caption{Deep1M, k=10}
	\end{subfigure}
	\hspace{-0.03\columnwidth}
	\begin{subfigure}[b]{0.50\columnwidth}
		\centering
		\includegraphics[width=\linewidth]{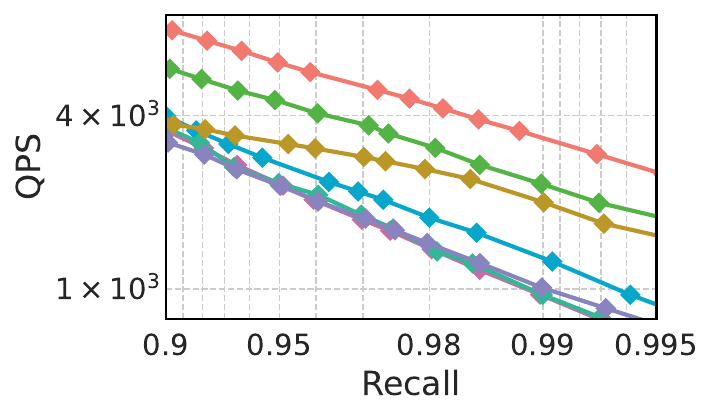}
		\caption{Deep1M, k=100}
	\end{subfigure}
	\vspace{-3mm}
	\caption{QPS vs Recall}
	\label{fig:qps-recall-part}
\end{figure}

The results clearly show that $\delta\text{-}\kwnospace{EMQG}$ outperforms all baseline methods across all datasets, k-values, and recall levels. At 99\% recall with $k=1$, $\delta\text{-}\kwnospace{EMQG}$ is 1.2x to 3.2x faster than the best baseline. For $k=10$ and $k=100$, this lead is between 1.2x and 2.1x.
Our non-quantized method, $\delta\text{-}\kwnospace{EMG}$, also exhibits highly competitive performance. It surpasses all other non-quantized baselines (HNSW, NSG, $\tau$-MG) in every tested configuration. For instance, at 99\% recall with $k=1$, $\delta\text{-}\kwnospace{EMG}$ is 2.1x to 2.3x faster than the best non-quantized baseline. Notably, at high recall regimes on the SIFT1M, SIFT50M, and MSong datasets, $\delta\text{-}\kwnospace{EMG}$ is even faster than the quantized baselines NGT-QG and SymphonyQG.

\textbf{Note that our algorithm delivers the best performance across all parameters and all datasets. For the complete results, please refer to the supplementary material.}

\stitle{Exp-2. Index Construction Comparison with the Baselines.}
We further analyze the index construction time and the final index size in memory. The index size includes the adjacency lists, vector storage, and quantization codes (for quantized methods). Figure~\ref{fig:construct-cost} shows these results.

\begin{figure}[t]\vspace{-3mm}
	\centering
	% 图例
	\includegraphics[width=0.76\columnwidth]{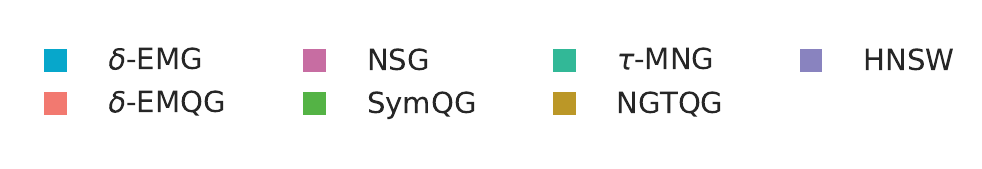}%
	\vspace{-0.3em} % 可调节图例与子图之间的垂直间距
	
	\captionsetup[subfigure]{justification=centering} % 可选：子图居中
	\begin{subfigure}[b]{0.50\columnwidth}
		\centering
		\includegraphics[width=\linewidth]{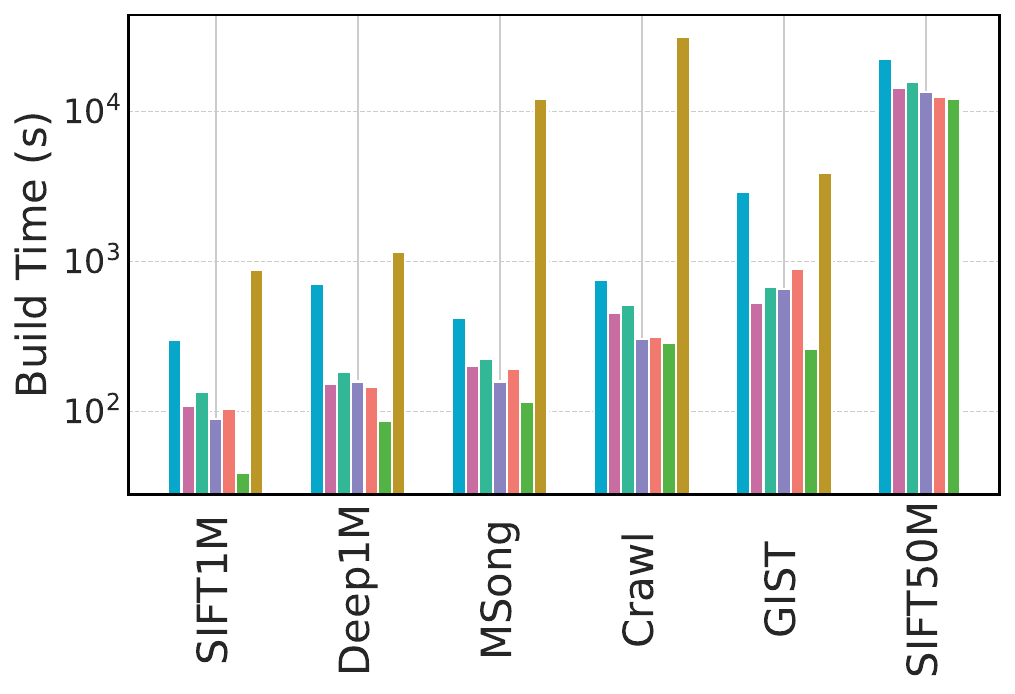}
		\caption{Build Time}
	\end{subfigure}
	\hspace{-0.02\columnwidth}
	\begin{subfigure}[b]{0.50\columnwidth}
		\centering
		\includegraphics[width=\linewidth]{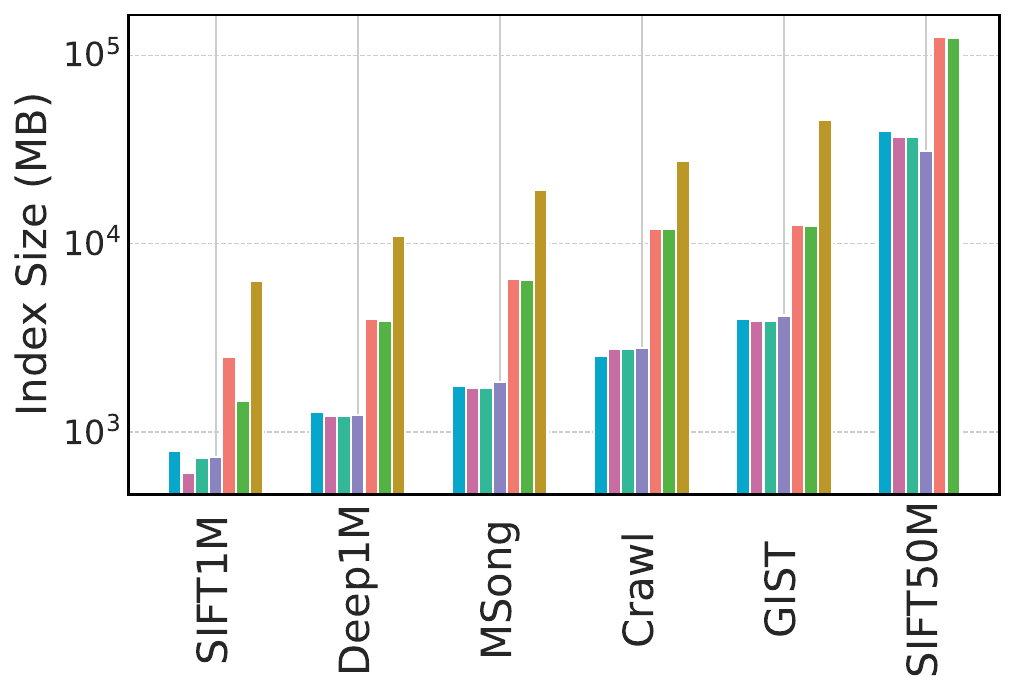}
		\caption{Index Size}
	\end{subfigure}
	\vspace{-6mm}
	\caption{Construction Cost}
	\vspace{-3mm}
	\label{fig:construct-cost}
\end{figure}

Overall, our proposed methods exhibit moderate index building costs in both time and space. The construction time for $\delta\text{-}\kwnospace{EMQG}$ is similar to the fastest baselines, HNSW and SymphonyQG. $\delta\text{-}\kwnospace{EMG}$ is slightly slower to build but still remains faster than NGT-QG.

In terms of index size, $\delta\text{-}\kwnospace{EMQG}$ has nearly the same size as SymphonyQG and smaller than NGT-QG. Meanwhile, the index size for $\delta\text{-}\kwnospace{EMG}$ is on par with the most space-efficient baselines like NSG and $\tau$-MG. This demonstrates that the superior search performance of our frameworks does not come at the cost of excessive indexing overhead.

\subsection{Effect of Construction Parameters}

\stitle{Exp-3. Effect of $\delta$.}
We investigated the impact of $\delta$ on search performance.
Since the exact construction of a $\delta\text{-}\kwnospace{EMG}$ (Algorithm~\ref{alg:construct_delta_emg}) is computationally prohibitive, we adopt the practical construction framework (Algorithm~\ref{alg:approx_construct_deltaemg}) with a fixed global $\delta$ throughout the entire build process, instead of using an adaptive value.

To isolate the effect of $\delta$, we varied its value while keeping all other construction parameters unchanged. For each resulting graph, we measured QPS at 95\% recall with $k=10$. The results are presented in Figure~\ref{fig:delta-sens-part}. Complete figures across all datasets are in the appendix (Figure~A.2).

The results show that QPS initially increases as $\delta$ grows, reaches a peak, and then gradually declines. The optimal performance is typically observed when $\delta$ lies between 0.04 and 0.06. The best $\delta$ for $\delta\text{-}\kwnospace{EMG}$ and $\delta\text{-}\kwnospace{EMQG}$ is nearly the same. This finding suggests that introducing a small but non-zero $\delta$ improves the structural connectivity of the graph and facilitates more efficient navigation during search. However, excessively large $\delta$ values lead to overly dense graphs, which in turn increase the traversal cost and reduce query efficiency.

\begin{figure}[t]\vspace{-1mm}
	\centering
	% 图例
	\includegraphics[width=0.6\columnwidth]{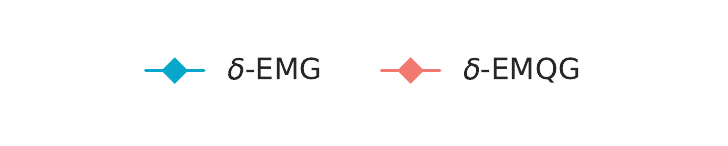}%
	\vspace{-0.9em} % 可调节图例与子图之间的垂直间距
	
	\captionsetup[subfigure]{justification=centering} % 可选：子图居中
	\begin{subfigure}[b]{0.50\columnwidth}
		\centering
		\includegraphics[width=\linewidth]{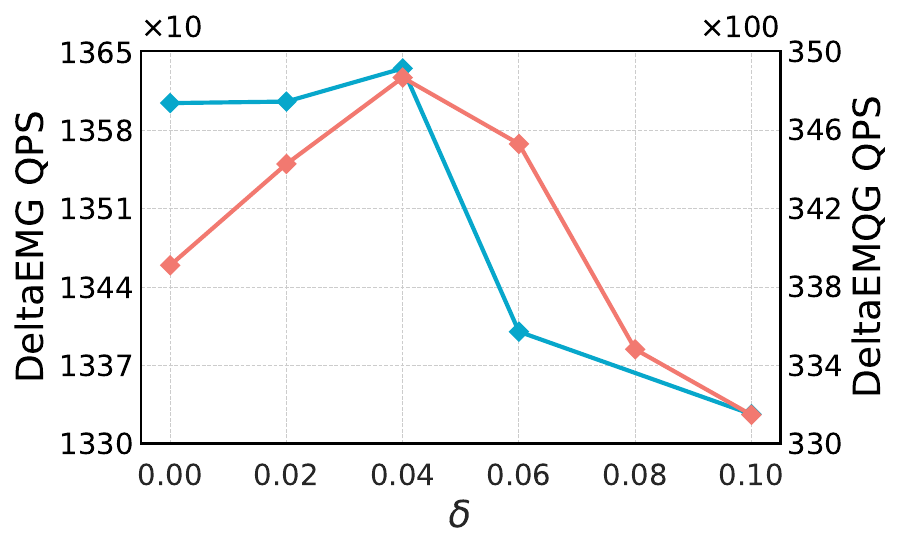}
		\caption{SIFT1M, k=10, recall=95\%}
	\end{subfigure}
	\hspace{-0.02\columnwidth}
	\begin{subfigure}[b]{0.50\columnwidth}
		\centering
		\includegraphics[width=\linewidth]{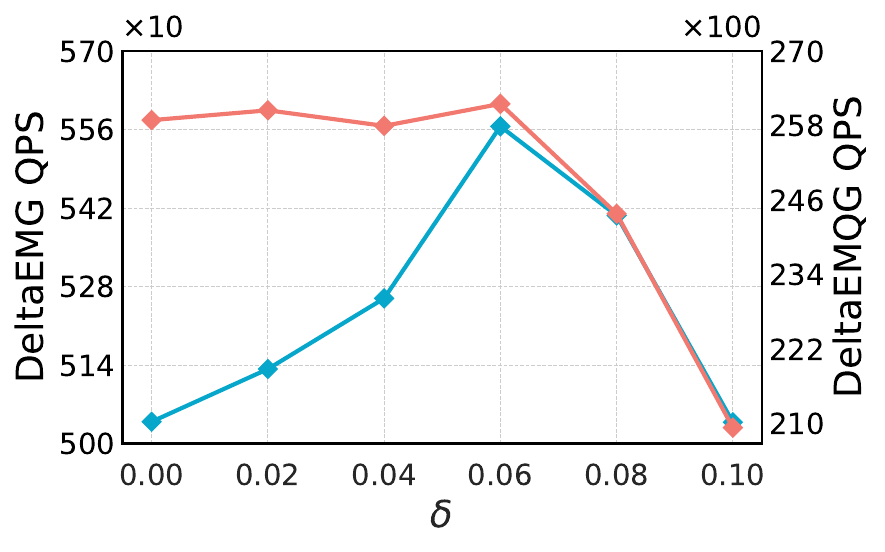}
		\caption{Deep1M, k=10, recall=95\%}
	\end{subfigure}
	\vspace{-6mm}
	\caption{Effect of $\delta$}
	\vspace{-3mm}
	\label{fig:delta-sens-part}
\end{figure}

\stitle{Exp-4. Effect of $t$.}
In our practical construction method, the parameter $t$ controls our adaptive $\delta$ rule, $\delta_t(u, v) = 1 - d(u,v)/d(u,v_{(t)})$, which is designed to balance graph density and search performance. We tested how different values of $t$ affect the QPS of both $\delta\text{-}\kwnospace{EMG}$ and $\delta\text{-}\kwnospace{EMQG}$.

We built indexes with various $t$ values while keeping other parameters constant, and then measured the QPS at 95\% recall with $k=10$. As shown in Figure~\ref{fig:t-sens-part} (see Figure~A.3 in appendix for full results), despite minor fluctuations, the QPS for both methods generally rises to a distinct peak and then declines as $t$ increases. We also noted that the optimal $t$ can differ for the quantized and non-quantized methods. This is likely attributable to the degree alignment procedure in the $\delta\text{-}\kwnospace{EMQG}$ construction, where $t$ is locally adjusted for nodes to ensure their out-degree is a multiple of the SIMD batch size. This additional constraint naturally shifts the optimal value for the $t$ parameter.

Crucially, the peak QPS achieved with the best adaptive $t$ was higher than the best performance attainable with any fixed $\delta$ in the previous experiment. This confirms that our adaptive rule is a more effective approach for building high-performance search graphs.

\begin{figure}[t]\vspace{-3mm}
	\centering
	% 图例
	\includegraphics[width=0.6\columnwidth]{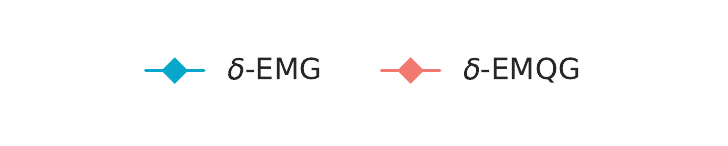}%
	\vspace{-0.9em} % 可调节图例与子图之间的垂直间距
	
	\captionsetup[subfigure]{justification=centering} % 可选：子图居中
	\begin{subfigure}[b]{0.50\columnwidth}
		\centering
		\includegraphics[width=\linewidth]{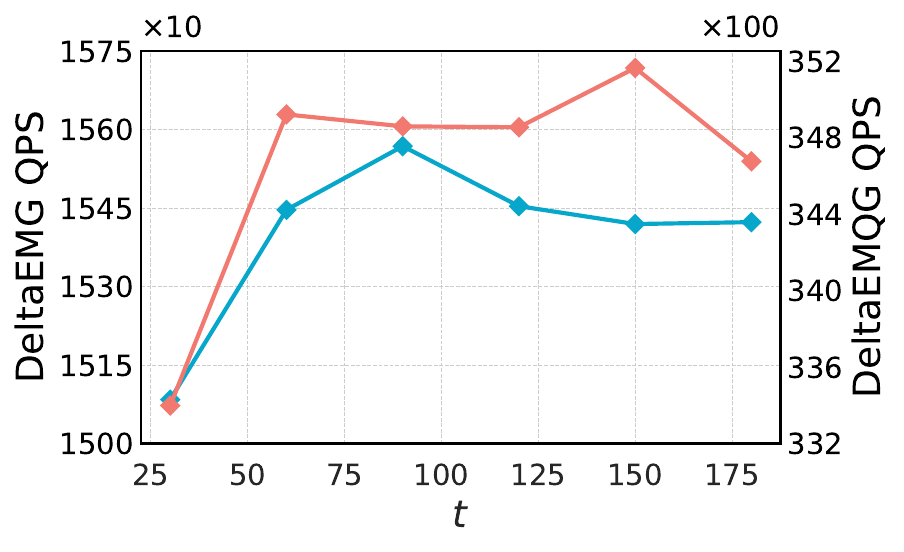}
		\caption{SIFT1M, k=10, recall=95\%}
	\end{subfigure}
	\hspace{-0.02\columnwidth}
	\begin{subfigure}[b]{0.50\columnwidth}
		\centering
		\includegraphics[width=\linewidth]{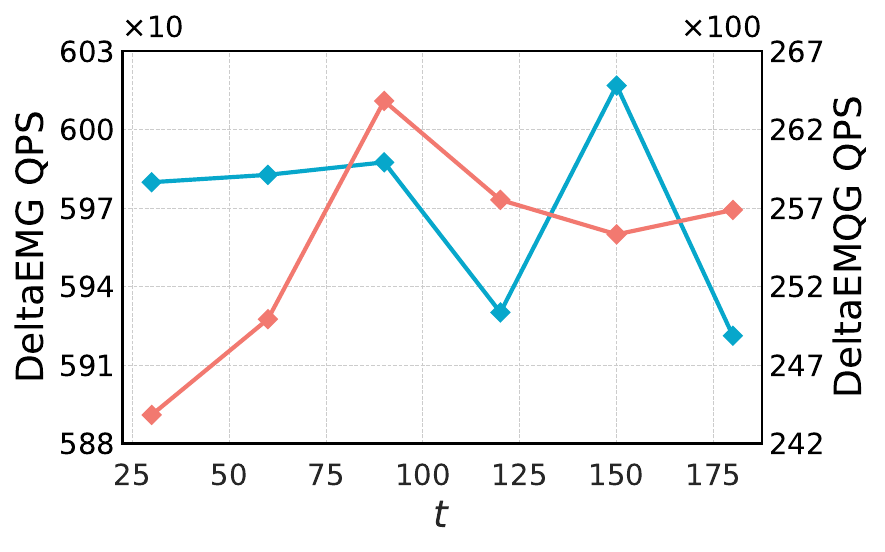}
		\caption{Deep1M, k=10, recall=95\%}
	\end{subfigure}
	\vspace{-6mm}
	\caption{Effect of $t$}
	\vspace{-3mm}
	\label{fig:t-sens-part}
\end{figure}

\subsection{Validating the Error-Bounded Framework}
\label{sec:effect}

\stitle{Exp-5. Analyzing Practical Search Error.}
Here, we test the practical impact of our error-bounded design. To quantify the precision of search results beyond recall, we employ the Relative Distance Error, which is defined as the average error $(d(q,r_{(i)})-d(q,v_{(i)}))/d(q,v_{(i)})$ across all queries and returned vectors.
This metric is intrinsically linked to our theoretical foundation; the result from $\delta$-Error-bounded ANN search guarantees an Relative Distance Error of less than $(1/\delta)-1$.
To provide a fair comparison of algorithmic efficiency, independent of implementation-specific factors like memory layout or caching, we use the average number of distance computations per query as the primary performance metric. This metric directly reflects the core workload of an ANN search. Our analysis focuses on a comparison between $\delta\text{-}\kwnospace{EMG}$ and other non-quantized baselines. We exclude quantized methods from this specific experiment because their performance is governed by a complex trade-off between approximate distance calculations and exact ones, which would obscure the fundamental graph navigation efficiency we aim to measure.

Figure~\ref{fig:distc-rderr-part} illustrates the performance curves on different datasets. Full results are in the appendix (Figure~A.4). We focus our analysis on the high-precision region (Relative Distance Error < 0.005), which typically corresponds to recall rates exceeding 95\%. The results show that $\delta\text{-}\kwnospace{EMG}$ consistently requires fewer distance computations to achieve the same Relative Distance Error compared to all baselines. For example, at an error of 0.001, $\delta\text{-}\kwnospace{EMG}$ needs only 50\% to 80\% of the computations of the next best method. 
This means the geometric guarantees of $\delta\text{-}\kwnospace{EMG}$ make the search process inherently more efficient. It finds accurate neighbors by exploring a smaller portion of the graph, which directly explains the higher QPS we observed in earlier experiments.

\begin{figure}[t]
	\centering
	% 图例
	\includegraphics[width=0.73\columnwidth]{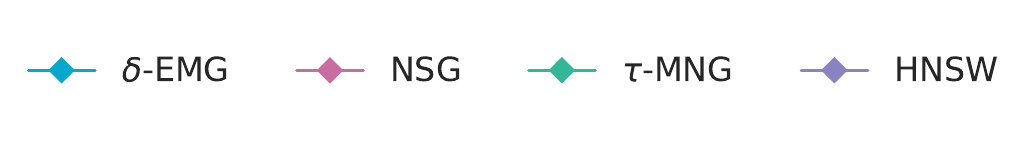}%
	\vspace{-0.2em} % 可调节图例与子图之间的垂直间距
	
	\captionsetup[subfigure]{justification=centering} % 可选：子图居中
	\begin{subfigure}[b]{0.503\columnwidth}
		\centering
		\includegraphics[width=\linewidth]{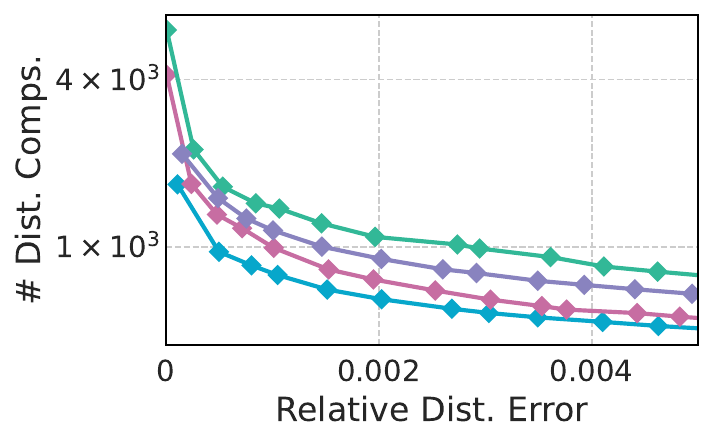}
		\caption{SIFT1M, k=1}
	\end{subfigure}
	\hspace{-0.025\columnwidth}
	\begin{subfigure}[b]{0.503\columnwidth}
		\centering
		\includegraphics[width=\linewidth]{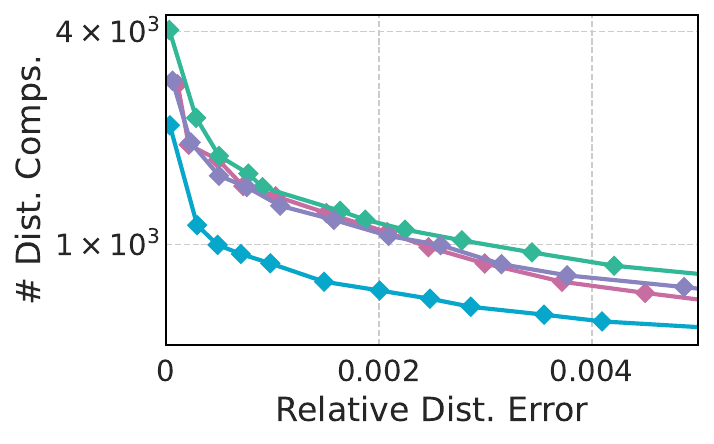}
		\caption{Deep1M, k=1}
	\end{subfigure}
	\vspace{-7mm}
	\caption{Distance Computations vs Relative Distance Error}
	\vspace{-3mm}
	\label{fig:distc-rderr-part}
\end{figure}

\stitle{Exp-6. Probability of Finding a Local Optimum node.}
Our theoretical guarantee for k-ANN search  (Theorem~\ref{thm:kann_guarantee}) depends on finding a "local optimum node" (a node with no closer neighbors) within the final search candidate $C[k:l]$. In this experiment, we test how often this happens.

To clearly validate the theory and analyze the relationship between $\delta$ and its derived $\delta'$, we constructed $\delta\text{-}\kwnospace{EMG}$ graphs using the optimal fixed $\delta$ for each dataset identified in Exp-3 (e.g., $\delta=0.04$ for Crawl, GIST, SIFT1M, and MSong; $\delta=0.06$ for Deep1M). We then performed searches with $k=10$ and varied the search parameter $\alpha$. For each value of $\alpha$, we measured the empirical probability that at least one local optimum exists within the candidate set $C[k:l]$ at the moment of termination.
As shown in Figure~\ref{fig:local_opt_prob}, this probability quickly rises with $\alpha$, approaching 95\% for $\alpha$ around 2.0. This confirms that the condition for our theoretical guarantee is almost always met in practice.

\begin{figure}[t]\vspace{-5mm}
	\centering
	% 图例
	\includegraphics[width=0.63\columnwidth]{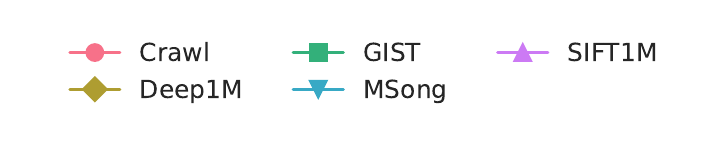}%
	\vspace{-3mm} % 可调节图例与子图之间的垂直间距
	
	\captionsetup[subfigure]{justification=centering} % 可选：子图居中
	\begin{subfigure}[b]{0.506\columnwidth}
		\centering
		\includegraphics[width=\linewidth]{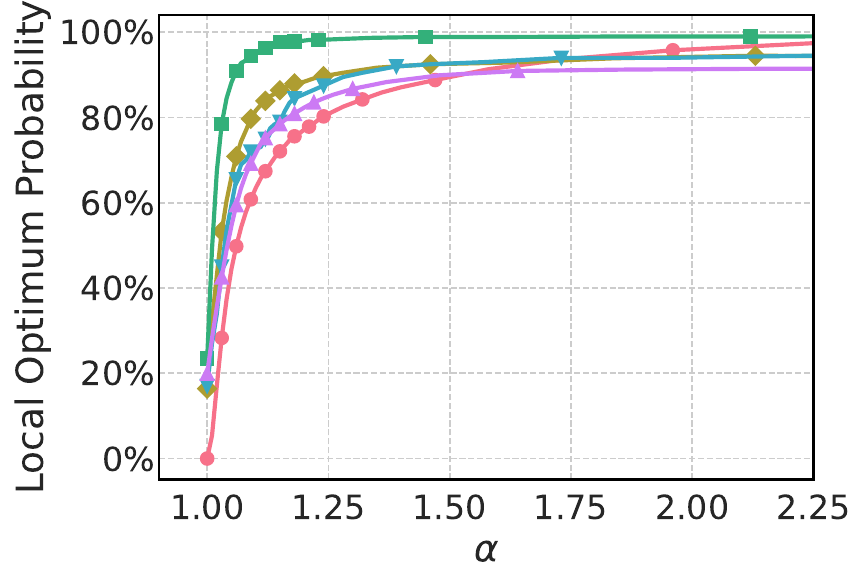}
		\caption{Probability of Local Optimum}
			\label{fig:local_opt_prob}
	\end{subfigure}
	\hspace{-0.03\columnwidth}
	\begin{subfigure}[b]{0.503\columnwidth}
		\centering
		\includegraphics[width=\linewidth]{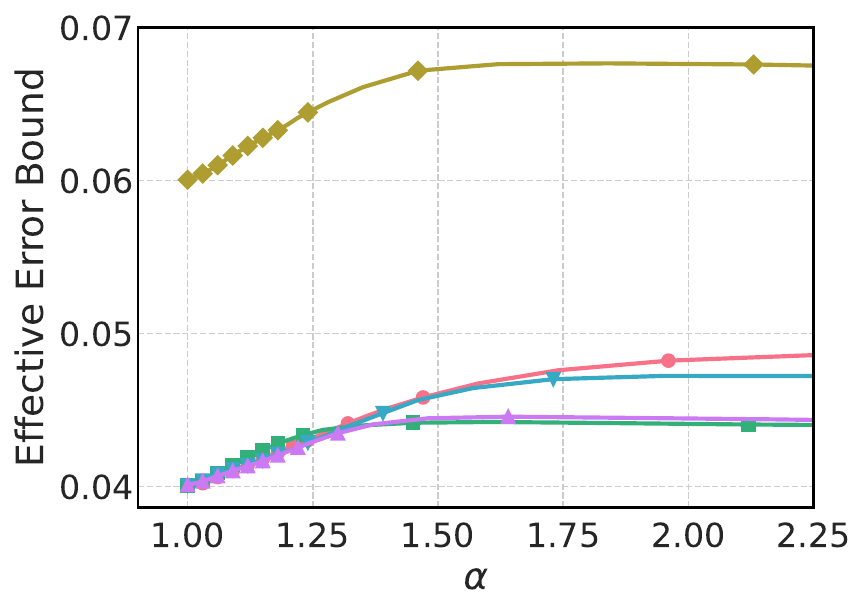}
		\caption{Achieved Error Bound}
		\label{fig:achieved_delta}
	\end{subfigure}
	\vspace{-7mm}
	\caption{Empirical validation of the ANN error guarantee}
	\vspace{-3mm}
	\label{fig:error_bound}
\end{figure}

\stitle{Exp-7. Analyzing the Theoretical Error Bound $\delta'$.}
When a local optimum node is found, the achieved bound is given by $\delta' = \delta \cdot d(q,u)/d(q,r_{(k)})$, where $u$ is the discovered local optimum. A larger $\delta'$ corresponds to a tighter error guarantee. Using the same experimental setup as above, we measured the average $\delta'$ achieved for queries where a local optimum node was found. The results are presented in Figure~\ref{fig:achieved_delta}.

The results show that the average $\delta'$ increases with $\alpha$ and then levels off. This is because a larger $\alpha$ leads to a wider search, which tends to find local optimum nodes that are farther away, improving the bound.

Importantly, the achieved $\delta'$ is always better than the base $\delta$ used to build the graph. For graphs built with $\delta=0.04$, we achieved a $\delta'$ of about 0.045–0.050. For the graph with $\delta=0.06$, we achieved a $\delta'$ of 0.068. This shows our method delivers a practical error guarantee that is even stronger than the one used for construction.

\subsection{Scalability and Ablation Experiments}
\stitle{Exp-8. Scalability Studies.}
In this section, we evaluate how our methods perform as the dataset size increases. We evaluated $\delta\text{-}\kwnospace{EMG}$ and $\delta\text{-}\kwnospace{EMQG}$ on subsets of the SIFT1B dataset of increasing magnitude, specifically with 1M, 5M, 10M, 20M, 50M, and 100M vectors.
We kept the build parameters for $\delta\text{-}\kwnospace{EMG}$ and $\delta\text{-}\kwnospace{EMQG}$ consistent across all scales. The parameter $L$ was fixed at 1000, and $M$ was set to 64 (except for $\delta\text{-}\kwnospace{EMQG}$ on the 50M and 100M datasets where it was reduced to 32 to reduce memory cost without compromising performance). The parameter $t$ was fixed to the best value found on SIFT1M.
We measured the search time needed to reach 95\% recall for $k \in \{1, 10, 100\}$. As shown in Figure~\ref{fig:scalabity}, the search time for both of our methods grows nearly linearly with the number of data points. This result confirms that our methods scale efficiently and are suitable for large-scale applications.

\begin{figure}[t]\vspace{-4mm}
	\centering
	\includegraphics[width=0.78\columnwidth]{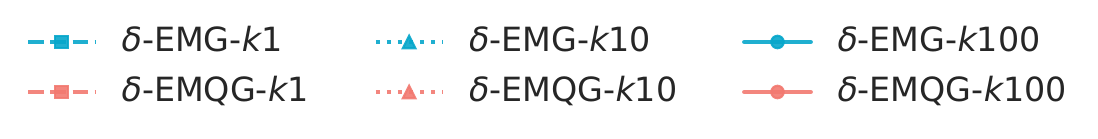}
	\vspace{-0.1em}
	
	\includegraphics[width=0.58\columnwidth,clip,trim=0 0 0 0]{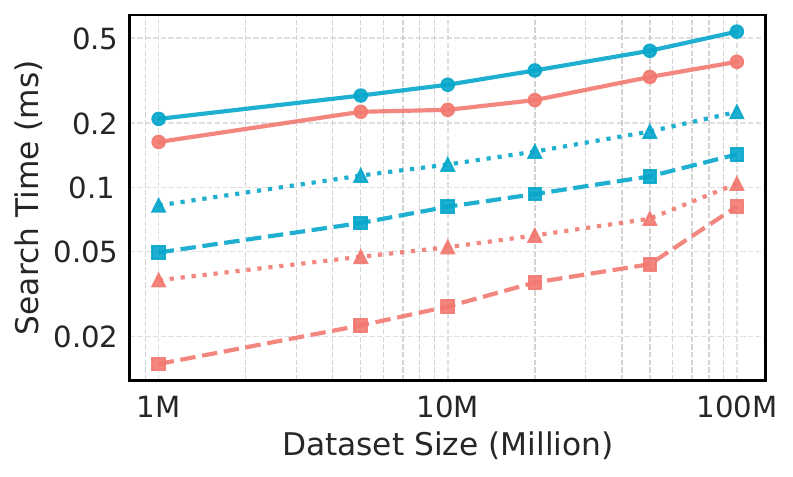}
	\vspace{-4mm}
	\caption{Search Time vs Data Size}
	\label{fig:scalabity}
\end{figure}

\stitle{Exp-9. Ablation Studies.}
To deconstruct the performance contributions of our proposed framework, we conducted an ablation study. We tested the importance of both our graph construction and our search algorithms separately.

First, we kept our specialized search algorithms but replaced our graph construction with baseline structures:
\begin{itemize}[leftmargin=*]
	\item \textbf{$\delta$-EMG-NSG}: Employs our Error-Bounded top-k ANN Search (Algorithm~\ref{alg:eb_ann_search}) on a standard NSG graph.
	
	\item \textbf{$\delta$-EMQG-NSG}: Employs our Probing Search on a SymphonyQG graph, which is based on NSG.
\end{itemize}

Next, we used our proposed graph constructions but reverted to simpler, standard search algorithms:
\begin{itemize}[leftmargin=*]
	\item \textbf{$\delta$-EMG-GS}: Employs a standard greedy search (Algorithm~\ref{alg:greedy_search}) on our $\delta\text{-}\kwnospace{EMG}$ graph.
	
	\item \textbf{$\delta$-EMQG-AGS}: Employs a approximate greedy search (AGS) on our $\delta\text{-}\kwnospace{EMQG}$. Originally proposed in SymphonyQG, AGS follows the same logic as a greedy search but uses approximate distances to guide the graph traversal.
\end{itemize}

We compared the QPS-Recall curves for these variants against our full methods at $k=10$. For the variants with standard search algorithms ($\delta$-EMG-GS and $\delta$-EMQG-AGS), the curves were generated by varying the candidate set size. As shown in Figure~\ref{fig:ablation-part}, removing either our graph construction or our search algorithm leads to a drop in performance. Full results can be seen in the supplementary material. This proves that both components are essential and work synergistically to achieve the final results.

\begin{figure}[t]\vspace{-2mm}
	\centering
	% 图例
	\includegraphics[width=0.74\columnwidth]{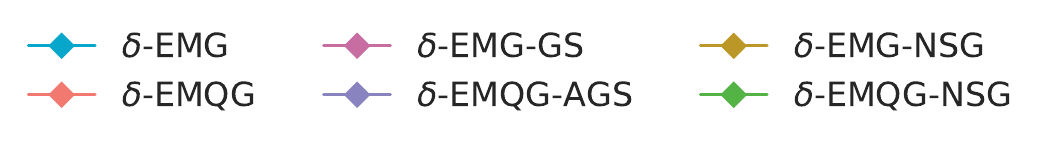}%
	\vspace{-0.2em} % 可调节图例与子图之间的垂直间距
	
	\captionsetup[subfigure]{justification=centering} % 可选：子图居中
	\begin{subfigure}[b]{0.50\columnwidth}
		\centering
		\includegraphics[width=\linewidth]{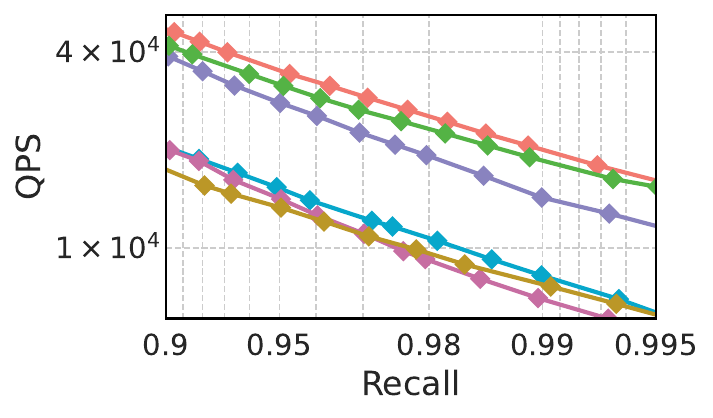}
		\caption{SIFT1M, k=10}
	\end{subfigure}
	\hspace{-0.03\columnwidth}
	\begin{subfigure}[b]{0.50\columnwidth}
		\centering
		\includegraphics[width=\linewidth]{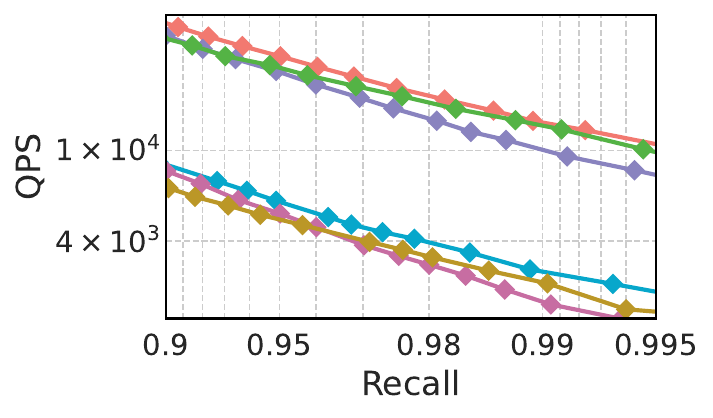}
		\caption{Deep1M, k=10}
	\end{subfigure}
	\vspace{-3mm}
	\caption{Ablation Study}
	\vspace{-3mm}
	\label{fig:ablation-part}
\end{figure}

\section{Conclusion}
In this paper, we addressed the challenge of achieving efficient and error-bounded top-\(k\) Approximate Nearest Neighbor (ANN) search. We proposed a novel graph-based model, the \(\delta\)-error-bounded monotonic graph (\(\delta\)-EMG), which ensures that the distances of retrieved top-\(k\) approximate nearest neighbors are within a user-specified error tolerance. Our method achieves a \((1/\delta)\)-approximation for top-\(k\) NN search with a space complexity of \(O(n \ln n)\). We also introduced \(\delta\)-EMQG, a quantized version of \(\delta\)-EMG, along with efficient construction and search algorithms. Our experimental results on multiple datasets demonstrated that our approach outperforms state-of-the-art methods in terms of query-per-second (QPS) while maintaining the desired error bounds. Furthermore, we provided extensive experiments, including scalability and ablation studies, to validate the effectiveness and efficiency of our algorithms. 

\clearpage

\balance
\bibliographystyle{ACM-Reference-Format}
\bibliography{software}

\end{document}